\renewcommand{\geq}{\geqslant}
\renewcommand{\leq}{\leqslant}
\renewcommand{\ge}{\geq}
\renewcommand{\le}{\leq}
\newcommand{\union}{\mathbin{\cup}}
\newcommand{\intersect}{\mathbin{\cap}}
\newcommand{\abs}[1] {\ensuremath\left|#1\right|}
\newcommand{\ceil}[1] {\ensuremath\left\lceil#1\right\rceil}
\newcommand{\set}[2] {\ensuremath{\left\{#1 \mid #2\right\}}}
\newcommand{\os}[1] {\ensuremath{\left\{#1\right\}}}
\newcommand{\bigO}{\mathcal{O}}
\newcommand{\FOm}{\ensuremath{\mathrm{FO}[{{\mkern 2mu\cdot\mkern 2mu}}]}}
\newcommand{\AC}{\ensuremath{\mathsf{AC}}}
\newcommand{\ACC}{\ensuremath{\mathsf{ACC}}}
\newcommand{\TC}{\ensuremath{\mathsf{TC}}}
\newcommand{\NC}{\ensuremath{\mathsf{NC}}}
\newcommand{\LOGSPACE}{\ensuremath{\mathsf{L}}}
\newcommand{\NL}{\ensuremath{\mathsf{NL}}}
\newcommand{\PTIME}{\ensuremath{\mathsf{P}}}
\newcommand{\NP}{\ensuremath{\mathsf{NP}}}
\newcommand{\vA}{\mathbf{A}}
\newcommand{\vB}{\mathbf{B}}
\newcommand{\vC}{\mathbf{C}}
\newcommand{\vD}{\mathbf{D}}
\newcommand{\vG}{\mathbf{G}}
\newcommand{\vGoid}{\mathbf{Goid}}
\newcommand{\vH}{\mathbf{H}}
\newcommand{\vI}{\mathbf{I}}
\newcommand{\vK}{\mathbf{K}}
\newcommand{\vM}{\mathbf{M}}
\newcommand{\vNil}{\mathbf{N}}
\newcommand{\vO}{\mathbf{O}}
\newcommand{\vP}{\mathbf{PGoid}}
\newcommand{\vS}{\mathbf{S}}
\newcommand{\vV}{\mathbf{V}}
\newcommand{\oD}{\mathbb{D}}
\newcommand{\oE}{\mathbb{E}}
\newcommand{\oL}{\mathbb{L}}
\newcommand{\vDV}{\mathbf{\oD{V}}}
\newcommand{\vEA}{\mathbf{\oE{A}}}
\newcommand{\vEV}{\mathbf{\oE{V}}}
\newcommand{\vLG}{\mathbf{\oL{G}}}
\newcommand{\vLI}{\mathbf{\oL{I}}}
\newcommand{\vLV}{\mathbf{\oL{V}}}
\newcommand{\gR}{\mathcal{R}}
\newcommand{\gL}{\mathcal{L}}
\newcommand{\gJ}{\mathcal{J}}
\newcommand{\gH}{\mathcal{H}}
\newcommand{\Rle}{\le_{\gR}}
\newcommand{\Lle}{\le_{\gL}}
\newcommand{\Jle}{\le_{\gJ}}
\newcommand{\Hle}{\le_{\gH}}
\newcommand{\Req}{\mathrel{\ensuremath{\gR}}}
\newcommand{\Leq}{\mathrel{\ensuremath{\gL}}}
\newcommand{\Jeq}{\mathrel{\ensuremath{\gJ}}}
\newcommand{\Heq}{\mathrel{\ensuremath{\gH}}}
\newcommand{\rRM}{\mathrel{\mathsf{RM}}}
\newcommand{\rLM}{\mathrel{\mathsf{LM}}}
\newcommand{\rGGM}{\mathrel{\mathsf{GGM}}}
\newcommand{\rAGGM}{\mathrel{\mathsf{AGGM}}}
\newcommand{\malcev}{\mathbin{\smash{\text{\small\ensuremath{%
  \tikz[baseline=-.75ex]{\node [shape=circle,draw,inner sep=0.4pt,scale=0.75] (m) {\ensuremath{\hspace*{0.1mm}m}};}
}}}}}
\newcommand{\tsp}{\ensuremath\mathbin{{*}\kern-.05ex{*}}}
\newcommand{\Parity}{\ensuremath{\textsc{Parity}}}
\newcommand{\ie}{i.e.,~}
\newcommand{\eg}{e.g.~}
\newcommand{\ms}{\hspace*{0.5pt}}
\setlist[itemize]{leftmargin=*,itemsep=-1ex}
\setlist[enumerate]{leftmargin=*,itemsep=-1ex}
\theoremstyle{plain}
\newtheorem{theorem}{Theorem}
\newtheorem{lemma}[theorem]{Lemma}
\newtheorem{corollary}[theorem]{Corollary}
\newtheorem{proposition}[theorem]{Proposition}
\title{Efficient Membership Testing for \protect\\ Pseudovarieties of Finite Semigroups}
\author{Lukas Fleischer}
\date{FMI, University of Stuttgart\thanks{Supported by the German Research Foundation (DFG) under grant DI 435/5--2.} \\
  Universitätsstraße 38, 70569 Stuttgart, Germany\\
\texttt{fleischer@fmi.uni-stuttgart.de}}
\begin{document}

\maketitle

\begin{abstract}
  \noindent
  {\sffamily\normalsize\bfseries{Abstract.}} \
  We consider the complexity of deciding membership of a given finite semigroup
  to a fixed pseudovariety.
  While it is known that there exist pseudovarieties with $\NP$-complete or
  even undecidable membership problems, for many well-known pseudovarieties the
  problem is known to be decidable in polynomial time.
  We show that for many of these pseudovarieties, the membership problem is
  actually in $\AC^0$.
  To this end, we show that these pseudovarieties can be characterized by
  first-order sentences with multiplication as the only predicate.
  We prove closure properties of the class of pseudovarieties with such
  first-order descriptions under various well-known operations; in particular,
  if $\vV$ can be described by a first-order sentence, then $\vDV$, $\vLV$,
  $\vK \malcev \vV$, $\vD \malcev \vV$, $\vNil \malcev \vV$, $\vLI \malcev
  \vV$, $\vLG \malcev \vV$ are first-order definable as well.
  Moreover, if $\vH$ is a first-order definable pseudovariety of finite groups,
  then $\overline \vH$ is first-order definable.
  Our formalism is also powerful enough to capture all pseudovarieties
  characterized by finite sets of $\omega$-identities.
  In view of lower bounds from circuit complexity, we obtain a new technique to
  prove that a pseudovariety $\vV$ cannot be defined by such a set: if
  membership in~$\vV$ is hard for $\Parity$, it cannot be defined in this logic
  and thus cannot be described by finitely many $\omega$-identities.
  We show that membership to $\vEA$ is $\LOGSPACE$-complete, thereby improving
  previous complexity results and obtaining a new proof that the pseudovariety
  cannot be described by finitely many $\omega$-identities at the same time.
\end{abstract}

\section{Introduction}

Over the past fifty years, algorithmic questions from language theory gave rise
to an intensive study of \emph{pseudovarieties} of finite semigroups.
These pseudovarieties are classes of finite semigroups closed under taking
(finite) direct products, subsemigroups and quotients. The concept was
introduced by Eilenberg~\cite{eil76} to formalize the correspondence between
natural subsets of the regular languages and finite semigroups.
Perhaps one of the most important algorithmic problems in the area of
pseudovarieties is the \emph{membership problem}, which asks for a given finite
semigroup $S$ whether or not it belongs to some fixed pseudovariety $\vV$.
It was shown that many natural operations on pseudovarieties do not preserve
decidability of the membership problem~\cite{abr92jsl,Rhodes99,Auinger10}.
Later, people started investigating the computational complexity of the
problem~\cite{Almeida1991,Volkov93,alm94,KharlampovichSapir94,Volkov97,AlmeidaBKK15,Vernitkii}.
While it has been well-known that for many commonly appearing pseudovarieties,
membership is decidable in polynomial time, it was conjectured that not all
pseudovarieties have this property.
Indeed, in 2006, Jackson and McKenzie constructed a pseudovariety for which
membership cannot be decided in polynomial time unless $\PTIME =
\NP$~\cite{JacksonMcKenzie2006}. Similar constructions followed; see
\eg\cite{Jackson2010}.

One of the main tools to show that the membership problem for a given
pseudovariety~$\vV$ can be decided in polynomial time is to show that $\vV$
can be defined by a finite set of so-called \emph{$\omega$-identities}. These
identities can be thought of as a system of equations with variables and terms
being built upon the binary operation in the semigroup and a special unary
operator, the so-called $\omega$-operator.
The na{\"\i}ve algorithm for testing membership to a pseudovariety defined by a
finite set of such identities is simply taking all possible assignments of
elements to the variables in the identities, computing the left-hand and
right-hand sides and testing whether all equalities hold.
Each left-hand and right-hand side can actually be computed within logarithmic
space. Thus, being definable by a finite set of \emph{$\omega$-identities} also
yields decidability in logarithmic space~\cite[Theorem
2.19]{StraubingW15arxiv}.

We continue this line of research by conducting a much more fine-grained
complexity analysis of pseudovarieties with polynomial-time decidable
membership.
We classify these pseudovarieties into natural complexity classes within
$\PTIME$ and we show that all for pseudovarieties defined by finitely many
$\omega$-identities, membership is decidable in $\AC^0$.
It is known that $\AC^0$ is not powerful enough to decide
$\Parity$~\cite{Hastad86,Yao85}.
Therefore, as a corollary, we obtain that a pseudovariety with membership hard
for any class containing $\Parity$ (such as $\ACC^0$, $\TC^0$, $\LOGSPACE$ or
$\NL$) cannot be defined by a finite set of $\omega$-identities.
We show that the membership problem for the pseudovariety $\vEA$, which was
proven to be not finitely based yet decidable in polynomial
time~\cite{Volkov93}, is $\LOGSPACE$-complete, thereby obtaining both a strict
improvement in complexity and a new and simple proof that this pseudovariety
cannot be defined by finitely many $\omega$-identities.
The $\LOGSPACE$-hardness proof is easy and relies on a technique reminiscent of
ideas appearing in~\cite{Mashevitzky83,Volkov93,Volkov96}.
The algorithm for proving containment in $\LOGSPACE$ is based on the notion of
the incidence graph of a Rees matrix semigroup as introduced by
Graham~\cite{Graham68} and on a famous result by Reingold~\cite{Reingold08}
which states that reachability in undirected graphs is decidable in
deterministic log-space.

Our main technique for establishing decidability of membership in $\AC^0$ is
interesting in its own right: we show that such pseudovarieties are definable
by first-order formulas with multiplication as the only predicate.
Our complexity bounds hold for all classes of finite semigroups definable using
this formalism which is, a priori, more powerful than finite sets of
$\omega$-identities.
We prove closure properties of the class of finite semigroups definable in this
logic fragment, such as closure under the $\oD(\cdot)$-operator, the
$\oL(\cdot)$-operator, the $\overline {\vphantom{\vH}(\cdot)}$-operator and
Mal'cev products with various pseudovarieties.
The logical formalism is a very natural way of defining classes of finite
semigroups, which makes the investigation of their expressiveness an
interesting subject even outside the scope of complexity questions.

Lastly, we investigate the complexity of deciding variety membership of
\emph{languages}, a problem which has been considered for various varieties and
for different representations of the languages
before~\cite{pw97,chohuynh91,GlasserSS16}.
This problem is shown to be $\NL$-hard when the language is given by morphisms
to finite semigroups, even for non-trivial language varieties for which the
corresponding pseudovariety membership problem is in~$\AC^0$.

In some of our results, we consider finite partial groupoids instead of finite
semigroups. There are two reasons for considering this generalization.
Firstly, the inputs are typically given as multiplication tables, and a priori,
we cannot guarantee that the operation given by such a multiplication table is
associative or that all entries are valid encodings of elements. Secondly, when
proving closure under certain operations, we need to consider subsets of
semigroups which are not subsemigroups. For example, for closure under the
$\oD(\cdot)$-operator, we need to restrict formulas to the regular
$\gJ$-classes of a semigroup which, in general, do not form subsemigroups (but
can be considered as partial subsemigroups).

\section{Preliminaries}

\paragraph{Algebra.}

A \emph{partial groupoid} is a non-empty set endowed with a partial binary
operation.
If the operation is associative, the structure is a \emph{partial semigroup}.
A partial groupoid with a total operation is called \emph{groupoid} (not to be
confused with the notion of groupoids arising in category theory).
If, additionally, the operation is associative, the structure is called
\emph{semigroup}.
A \emph{monoid} is a semigroup with an identity element, and if every element
has a~(unique) inverse, the monoid is a \emph{group}.
The class of all finite partial groupoids is denoted by $\vP$, the class of all
finite groupoids is denoted by $\vGoid$, the class of all finite semigroups is
denoted by $\vS$, the class of all finite monoids is denoted by $\vM$ and the
class of all finite groups is denoted by $\vG$.
All algebraic structures considered in this work are either free or finite.
The operation of a partial groupoid is often also referred to as
\emph{multiplication}.

If $G$ is a partial groupoid with multiplication $\cdot \colon G \times G \to
G$ and $X$ is a subset of $G$, then $X$ forms a partial groupoid with the
multiplication $\circ \colon X \times X \to X$ given by $x \circ y = x \cdot y$
if $x \cdot y \in X$ and $x \circ y$ undefined otherwise.
The partial subgroupoid \emph{generated by a subset $X$} of a partial groupoid
$G$, denoted by $\langle X \rangle$, is the smallest subgroupoid of $G$
containing all elements of $X$ and closed under multiplication.
An element $g$ of a partial groupoid $G$ is \emph{idempotent} if $g \cdot g = g$.
The set of all idempotent elements of $G$ is denoted by $E(G)$.
We define a unary operator $\omega_G \colon G \to G$ as follows: for each
element $g \in G$, let $g^{(1)} = g$ and $g^{(i+1)} = g^{(i)} \cdot g$. Then,
if $\set{g^{(i)}}{i \ge 1}$ contains a unique idempotent element $h$, we let
$\omega_G(g) = h$. Otherwise, $\omega_G(g)$ is undefined. We often use the
notation $g^{\omega_G}$ instead of $\omega_G(g)$.

In this work, we are mainly interested in finite semigroups.
It is well-known that in a finite semigroup $S$, the element $s^{\omega_S}$ is
defined for each $s \in S$, see \eg\cite[Proposition 1.6]{pin86}.
Green's relations are a useful tool to study structural properties of finite
semigroups.
We denote by $S^1$ the monoid that is obtained by adding a new neutral element
$1$ to a finite semigroup $S$.
For $s, t \in S$ let
\begin{ceqn}
  \begin{align*}
    s & \Rle t \text{~if there exists~} q \in S^1 \text{~such that~} s = tq, &
    s & \Req t \text{~if~} s \Rle t \text{~and~} t \Rle s, \\
    s & \Lle t \text{~if there exists~} p \in S^1 \text{~such that~} s = pt, &
    s & \Leq t \text{~if~} s \Lle t \text{~and~} t \Lle s, \\
    s & \Jle t \text{~if there exist~} p, q \in S^1 \text{~such that~} s = ptq, &
    s & \Jeq t \text{~if~} s \Jle t \text{~and~} t \Jle s, \\
    s & \Hle t \text{~if~} s \Rle t \text{~and~} s \Lle t, &
    s & \Heq t \text{~if~} s \Hle t \text{~and~} t \Hle s.
  \end{align*}
\end{ceqn}
Note that Green's relations can also be defined for partial groupoids but many
important properties require associativity.
For example, in a semigroup, each of the relations defined above is transitive.
The relation $\gR$ (resp.~$\gL$, $\gJ$, $\gH$) then becomes an equivalence
relation and its equivalence classes are called \emph{$\gR$-classes}
(resp.~\emph{$\gL$-classes}, \emph{$\gJ$-classes}, \emph{$\gH$-classes}).
A $\gJ$-class is \emph{regular} if it contains an idempotent element.

A \emph{semigroup congruence} on a semigroup $S$ is an equivalence relation
$\sim$ on $S$ which is compatible with multiplication, \ie{}$x \sim y$ implies
$pxq \sim pyq$ for all $x, y \in S$ and $p, q \in S^1$.
Given a semigroup congruence $\sim$ on $S$, one can define the \emph{quotient}
$S / {\sim}$ which is the set of equivalence classes of $S$ modulo $\sim$
equipped with the canonical multiplication induced by the multiplication in
$S$.
A \emph{morphism} from a semigroup $S$ to a semigroup $T$ is a mapping $h
\colon S \to T$ such that $h(x)h(y) = h(xy)$ for all $x, y \in S$.
The \emph{direct product} of two semigroups $S$ and $T$ is the Cartesian
product $S \times T$ with componentwise multiplication.

A \emph{relation (of arity $k$) on partial groupoids} is a family of relations
${(R_G)}_{G \in \vP}$ such that $R_G \subseteq G^k$ for each partial groupoid
$G$. For a given relation on partial groupoids, usually denoted by $R$, we
write $R_G$ to denote the relation corresponding to the partial groupoid $G$.
A~\emph{semigroup congruence on partial groupoids} is a binary relation on
partial groupoids $\sim$ such that for each finite semigroup $S$, the relation
$\sim_S$ is a semigroup congruence.

\paragraph{Pseudovarieties.}

A \emph{pseudovariety of finite semigroups} is a non-empty class of finite
semigroups closed under taking finite direct products, subsemigroups and
quotients.
All pseudovarieties considered in this work are pseudovarieties of finite
semigroups. Thus, we usually refer to them simply as \emph{pseudovarieties}.
The following pseudovarieties will be used later:
\begin{itemize}
  \item $\vA$, the pseudovariety of finite aperiodic semigroups, \ie{}$x \Heq y$ implies $x = y$,
  \item $\vB$, the pseudovariety of all finite bands (all elements are idempotent),
  \item $\vD$, the pseudovariety of all definite semigroups, \ie{}$ex = e$ for all $e \in E(S)$ and $x \in S$,
  \item $\vG$, the pseudovariety of all finite groups,
  \item $\vI$, the pseudovariety containing only the trivial semigroup $\os{1}$,
  \item $\vK$, the pseudovariety of all co-definite semigroups, \ie{}$xe = e$ for all $e \in E(S)$, $x \in S$,
  \item $\vNil$, the pseudovariety of all nilpotent semigroups (all idempotents are zero elements),
  \item $\vO$, the pseudovariety of finite orthodox semigroups, \ie{}$e, f \in E(S)$ implies $ef \in E(S)$,
  \item $\vS$, the pseudovariety of all finite semigroups.
\end{itemize}

Proofs that the classes listed above are indeed pseudovarieties can be found in
many standard textbooks, see \eg\cite{pin86,rs09qtheory,alm94}.

For two pseudovarieties $\vV_1$ and $\vV_2$, the \emph{join} of $\vV_1$ and
$\vV_2$, usually denoted by $\vV_1 \lor \vV_2$, is the smallest pseudovariety
containing both $\vV_1$ and $\vV_2$. One can also define the semidirect product
$\vV_1 * \vV_2$, the two-sided semidirect product $\vV_1 \tsp \vV_2$ and the
Mal'cev product $\vV_1 \malcev \vV_2$.
The formal definitions of these operations are not required in this work and
we refer the interested reader to~\cite{alm94,rs09qtheory} for details.

We will also consider some commonly occurring unary operators on
pseudovarieties.
For a pseudovariety $\vV$ and a finite semigroup $S$, we have
\begin{align*}
  S & \in \vDV\quad\text{if all regular $\gJ$-classes of $S$ form subsemigroups from $\vV$}, \\
  S & \in \vEV\quad\text{if $\langle E(S) \rangle \in \vV$}, \\
  S & \in \vLV\quad\text{if for all $e \in E(S)$, the monoid $eSe$ belongs to $\vV$}.
\end{align*}
For a pseudovariety of finite groups $\vH$, we let $\overline \vH$ be the class
of finite semigroups whose regular $\gH$-classes are groups from $\vH$.
For any pseudovariety of finite semigroups $\vV$ and for any pseudovariety of
finite groups $\vH$, the classes $\vDV$, $\vEV$, $\vLV$ and $\overline \vH$ are
themselves pseudovarieties; see \eg\cite{rs09qtheory,alm94}.

For a set of variables $X$, the set of \emph{$\omega$-terms} over $X$ is
defined inductively as follows: every variable from $X$ is an $\omega$-term and
if $U$ and $V$ are $\omega$-terms, then so are $UV$ and $U^\omega$.
An \emph{$\omega$-identity} is an equation of the form $U = V$ where $U$ and
$V$ are $\omega$-terms. If neither $U$ nor $V$ contain a subterm of the form
$U^\omega$, the identity is called an \emph{equation}.
Every mapping $h \colon X \to S$ to a finite semigroup $S$ extends uniquely to
$\omega$-terms by $h(UV) = h(U) h(V)$ and $h(U^\omega) = {(h(V))}^{\omega_S}$.
Such a mapping \emph{satisfies} an $\omega$-identity $U = V$ if $h(U) = h(V)$.
A finite semigroup $S$ satisfies an $\omega$-identity if the identity is
satisfied by every mapping $h \colon X \to S$. A set of $\omega$-identities is
satisfied if each of the identities in the set is satisfied.
The class of finite semigroups \emph{defined} by a set of $\omega$-identities
is the class of all finite semigroups satisfying the given set.
It is well known that every class of finite semigroups defined by a (not
necessarily finite) set of $\omega$-identities is a pseudovariety~\cite{rei82}.

\paragraph{Complexity.}

We assume familiarity with standard definitions from circuit complexity.
The circuit complexity class $\AC^0$ is the class of languages decidable by
unbounded fan-in Boolean circuit families of depth $\bigO(1)$ and polynomial
size.
We allow NOT gates but do not count them when measuring the depth or the size
of a circuit.
The class of languages decidable by a deterministic (resp.~non-deterministic)
Turing machine using a logarithmic amount of work space is denoted by
$\LOGSPACE$ (resp.~$\NL$). We will also briefly refer to the complexity classes
$\ACC^0$ and $\TC^0$.

It is known that the \Parity~function cannot be computed by $\AC^0$ circuits.
This follows directly from H{\aa}stad's and Yao's famous lower bound
results~\cite{Hastad86,Yao85}, which state that the number of Boolean gates
required for a depth-$d$ circuit to compute $\Parity$ is exponential in
$n^{1/(d-1)}$.
By identifying formal languages with their indicator functions, $\Parity$ is
often also understood as the set of strings over $\os{0, 1}^*$ with an even
number of $1$s.

\paragraph{Model of Computation.}

We consider decision problems whose inputs are finite partial groupoids,
encoded as multiplication tables.
More precisely, a partial groupoid $G$ of cardinality $N$ is encoded by $N^2
\ceil{\log(N+1)}$ bits: the elements are represented by the integers $\os{1,
\dots, N}$ and the multiplication table is given by sequence of elements
occurring in the table in row-major order.
Some values in the input may refer to non-existent elements (in particular, the
values $0$ and $N+1, \dots, 2^{\ceil{\log(N+1)}} - 1$ are not valid encodings
of any element). Such entries are interpreted as the product being undefined.

\paragraph{First-order Logic.}

We consider $\FOm$-formulas over finite partial groupoids. Variables correspond
to elements of a finite partial groupoid and the only allowed predicate is the
binary operation ``$xy = z$''.

Formally, the set of \emph{$\FOm$-formulas} is then defined inductively: each
atomic formula of the form $xy = z$ with variables $x, y, z$ is a
$\FOm$-formula and if $\varphi$ and $\psi$ are $\FOm$-formulas and $x$ is a
variable, then $\varphi \lor \psi$, $\lnot \varphi$ and $\exists x \colon
\varphi$ are $\FOm$-formulas as well.
We will use common abbreviations such as $\forall x \colon \varphi$ instead of
$\lnot \exists x \colon \lnot \varphi$, $\varphi \to \psi$ instead of
$\lnot \varphi \lor \psi$ or $\varphi \leftrightarrow \psi$ instead of
$(\varphi \land \psi) \lor (\lnot \varphi \land \lnot \psi)$.
A variable is said to occur \emph{freely} in a $\FOm$-formula if it does not
appear in the scope of a quantifier.
Sometimes, we write $\varphi(x_1, \dots, x_k)$ to indicate that at most the
variables $x_1, \dots, x_k$ occur freely in $\varphi$. A \emph{sentence} is a
formula without any free variables.

The \emph{truth value} of a $\FOm$-formula $\varphi(x_1, \dots, x_k)$ in a
partial groupoid $G$ for \emph{assignments} of elements $g_1, \dots, g_k \in G$
to the variables $x_1, \dots, x_k$ is defined as usual, and we write $(G, g_1,
\dots, g_k) \models \varphi(x_1, \dots, x_k)$ if the assignment specified by
$G$ and by assigning each $g_i$ to the corresponding $x_i$ satisfies~$\varphi$.
Two $\FOm$-formulas are \emph{equivalent} if they are satisfied by exactly
the same structures.
The relation symbol $\cdot$ is interpreted as the binary operation in the
corresponding finite partial groupoid $G$: it consists of all tuples $(x, y,
z)$ such that $xy = z$ in $G$. When using the relation in formulas, we usually
write $xy = z$ instead of ${{}\cdot{}}(x, y, z)$.
A class of finite partial groupoids $\vC$ is \emph{defined} by a
$\FOm$-sentence $\varphi$ if $G \in \vC$ if and only if $G \models \varphi$.
A relation on partial groupoids $R$ of arity $k$ is \emph{defined} by a
$\FOm$-formula $\psi(x_1, \dots, x_k)$ if $(g_1, \dots, g_k) \in R_G$ if and
only if $(G, g_1, \dots, g_k) \models \psi(x_1, \dots, x_k)$.

\section{First-Order Definable Classes of Finite Partial Groupoids}
\label{sec:fo}

The objective of this section is to provide sufficient conditions for
$\FOm$-definability of pseudovarieties, alongside with some examples of classes
definable in these logic fragments.
In Table~\ref{tab:intro}, we give $\FOm$-formulas for some of the most
well-known families of algebraic structures. These formulas will be reused
throughout the paper and referenced by the identifiers given in the table.
It is easy to see that the class of $\FOm$-definable classes of finite partial
groupoids is closed under Boolean operations.

\begin{table}
  \begin{tabularx}{\textwidth}{p{15mm}lX}
    \toprule
    Class & \multicolumn{2}{l}{$\FOm$-formula} \\
    \midrule
    $\vP$ & $\varphi_\vP :=$ & $\exists x \colon x = x$ \\
    $\vGoid$ & $\varphi_\vGoid :=$ & $\forall x \forall y \exists z \colon xy = z$ \\
    $\vS$ & $\varphi_\vS :=$ & $\varphi_\vGoid \land \forall x \forall y \forall z \exists u \exists v \exists w \colon xy = u \land uz = v \land xw = v \land w = yz$ \\
    $\vI$ & $\varphi_\vI :=$ & $\varphi_S \land \forall x \forall y \colon x = y$ \\
    $\vM$ & $\varphi_\vM := $ & $\varphi_\vS \land \exists x \forall y \colon xy = y \land yx = y$ \\
    $\vG$ & $\varphi_\vG := $ & $\varphi_\vM \land \exists x \forall y \exists z \colon xy = y \land yx = y \land yz = x \land zy = x$ \\
    $\vB$ & $\varphi_\vB := $ & $\varphi_\vS \land \forall x \colon xx = x$ \\
    $\vO$ & $\varphi_\vO := $ & $\varphi_\vS \land \forall x \forall y \colon (xx = x \land yy = y) \to \exists z \colon xy = z \land zz = z$ \\
    \bottomrule
  \end{tabularx}
  \caption{Common classes of finite partial groupoids and corresponding $\FOm$-formulas}
  \label{tab:intro}
\end{table}

\begin{lemma}
  Let $\vC_1, \vC_2$ be $\FOm$-definable classes of finite partial groupoids.
  Then $\vC_1 \union \vC_2$, $\vC_1 \intersect \vC_2$ and $\vC_1 \setminus
  \vC_2$ are $\FOm$-definable.
  \label{lem:bool}
\end{lemma}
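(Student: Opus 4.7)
The plan is essentially to combine the defining sentences of $\vC_1$ and $\vC_2$ using the Boolean connectives already built into $\FOm$. Let $\varphi_1$ and $\varphi_2$ be $\FOm$-sentences defining $\vC_1$ and $\vC_2$, respectively. I would claim that $\varphi_1 \lor \varphi_2$ defines $\vC_1 \cup \vC_2$, that $\varphi_1 \land \varphi_2$ defines $\vC_1 \cap \vC_2$, and that $\varphi_1 \land \lnot \varphi_2$ defines $\vC_1 \setminus \vC_2$.

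The verification is a direct unfolding of the semantics: for any finite partial groupoid $G$, we have $G \models \varphi_1 \lor \varphi_2$ if and only if $G \models \varphi_1$ or $G \models \varphi_2$, which by the defining property of $\varphi_1$ and $\varphi_2$ is equivalent to $G \in \vC_1$ or $G \in \vC_2$, \ie $G \in \vC_1 \cup \vC_2$. The analogous argument applies in the other two cases.

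The only small point to address is that $\land$ is not part of the primitive syntax of $\FOm$ (only $\lor$ and $\lnot$ are), but this is resolved by the standard abbreviation $\varphi \land \psi := \lnot(\lnot \varphi \lor \lnot \psi)$, which has already been acknowledged in the preliminaries. Hence the combined formulas are genuine $\FOm$-sentences, and there is no real obstacle beyond writing the three cases down.
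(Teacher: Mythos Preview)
Your proof is correct and is essentially identical to the paper's own argument: both simply take defining sentences $\varphi_1$, $\varphi_2$ and form $\varphi_1 \lor \varphi_2$, $\varphi_1 \land \varphi_2$, and $\varphi_1 \land \lnot \varphi_2$. Your added remarks about unfolding the semantics and about $\land$ being a derived connective are fine but go slightly beyond what the paper bothers to spell out.
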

\begin{proof}
  If $\vC_1$ is defined by $\varphi_1$ and $\vC_2$ is defined by $\varphi_2$,
  then $\vC_1 \union \vC_2$ is defined by $\varphi_1 \lor \varphi_2$, $\vC_1
  \intersect \vC_2$ is defined by $\varphi_1 \land \varphi_2$, and $\vC_1
  \setminus \vC_2$ is defined by $\varphi_1 \land \lnot \varphi_2$.
\end{proof}

One can also show that Green's relations may be used in $\FOm$-formulas without
changing the expressive power of this logical fragment.

\begin{lemma}
  Green's relations $\Rle$, $\Lle$, $\Jle$, $\Hle$, $\gR$, $\gL$, $\gJ$ and
  $\gH$ are $\FOm$-definable.
  \label{lem:green}
\end{lemma}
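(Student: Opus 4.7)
The plan is to translate the definitions of Green's preorders directly into $\FOm$, using auxiliary existentially quantified variables whenever a non-atomic product occurs, and then to obtain the equivalence relations and $\Hle$, $\gH$ by Boolean combinations.

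First, I would define $\Rle$ and $\Lle$. Since $S^1$ is used in the definitions, I split on whether the extra element $q$ (resp.\ $p$) is the adjoined identity or a genuine element of the partial groupoid. In the former case the condition reduces to equality, so I set
\[
  \psi_{\Rle}(s,t) := s = t \lor \exists q \colon tq = s,
  \qquad
  \psi_{\Lle}(s,t) := s = t \lor \exists p \colon pt = s.
\]
For $\Jle$, the same case analysis gives four branches (both sides are the identity, only the left one is, only the right one is, neither is). The genuinely two-sided branch needs an auxiliary variable to stay within the binary atomic formula $xy=z$:
\[
  \psi_{\Jle}(s,t) := \psi_{\Rle}(s,t) \lor \psi_{\Lle}(s,t) \lor \exists p \exists q \exists u \colon pt = u \land uq = s.
\]
Then define $\psi_{\Hle}(s,t) := \psi_{\Rle}(s,t) \land \psi_{\Lle}(s,t)$.

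The equivalence relations are the symmetrizations, which are immediate conjunctions:
\[
  \psi_{\gR}(s,t) := \psi_{\Rle}(s,t) \land \psi_{\Rle}(t,s),
\]
and analogously for $\gL$, $\gJ$, $\gH$. This yields $\FOm$-formulas defining each of the eight relations as a relation on partial groupoids in the sense of the preliminaries: evaluating $\psi$ in a partial groupoid $G$ at $(s,t)$ gives exactly the pair $(s,t) \in R_G$, since the atomic formula $xy=z$ is read as ``the product is defined and equals $z$''.

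I do not anticipate a serious obstacle here; the only subtlety is the treatment of the identity adjoined in $S^1$, which is handled by the disjunction with $s = t$. Once this is in place, everything else is a mechanical translation, and correctness holds verbatim on partial groupoids because all quantifiers range over the underlying set and partial products that fail to exist simply falsify the corresponding atomic subformula.
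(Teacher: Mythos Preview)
Your proposal is correct and matches the paper's own argument essentially verbatim: the paper writes $x \Rle y$ as $x = y \lor \exists z \colon yz = x$, obtains $\gR$ by symmetrizing, and then says ``a similar construction can be used for the other relations.'' You have simply spelled out those similar constructions (including the auxiliary variable for the two-sided product in $\Jle$), so there is nothing to add.
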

\begin{proof}
  We have $x \Rle y$ if and only if $x = y \lor \exists z \colon yz = x$ and
  $x \Req y$ if and only if $x \Rle y$ and $y \Rle x$.
  A similar construction can be used for the other relations.
\end{proof}

We will make use of this observation and henceforth, Green's relations will be
used as abbreviations in some $\FOm$-formulas without further explanation.
The next lemma shows how Green's relations can be used to express the
$\omega$-operator.

\begin{lemma}
  Let $S$ be a finite semigroup and let $s \in S$. Then $s^{\omega_S}$ is the
  unique $\Hle$-maximal idempotent element $t \in S$ such that $st \Req t$ and
  $ts \Leq t$.
  \label{lem:omega-green}
\end{lemma}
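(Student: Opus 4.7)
My plan is to prove the two separate statements: (i) $e := s^{\omega_S}$ satisfies the three listed conditions (idempotent, $se \Req e$, $es \Leq e$), and (ii) every idempotent $t$ satisfying $st \Req t$ and $ts \Leq t$ lies $\Hle$-below $e$. Together these imply that $e$ is the unique $\Hle$-maximum (hence unique $\Hle$-maximal) element with the stated properties.

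For (i), write $e = s^n$ for some $n \geq 1$ with $s^n \cdot s^n = s^n$; such an $n$ exists by definition of $\omega_S$ in a finite semigroup. Then $e$ is idempotent and commutes with $s$, so $se = es = e \cdot s$, which immediately gives $se \Rle e$ and $es \Lle e$. For the reverse inequalities, I use the standard fact that the powers of $s$ eventually form a finite cyclic group with identity $e$. In this group $se$ has a (two-sided) inverse $r$, so $se \cdot r = r \cdot se = e$, yielding $e \Rle se$ and, by the symmetric argument with $es$, $e \Lle es$.

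For (ii), assume $t = t^2$ with $st \Req t$ and $ts \Leq t$. From $st \Req t$ pick $q \in S^1$ with $t = s t q$. Substituting this equation into its own right-hand side and iterating gives
\begin{equation*}
  t = s^k t q^k \quad \text{for every } k \ge 1.
\end{equation*}
Taking $k = n$ so that $s^n = e$ yields $t = e \cdot (t q^n)$, hence $t \Rle e$. Symmetrically, $ts \Leq t$ gives some $p \in S^1$ with $t = p t s$, and iterating produces $t = p^k t s^k$, so at $k=n$ we obtain $t = (p^n t) \cdot e$ and $t \Lle e$. Combining these, $t \Hle e$.

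Putting (i) and (ii) together, $e$ satisfies the conditions and dominates every other candidate in the $\Hle$-order, so $e$ is the unique $\Hle$-maximal such idempotent. The only place that requires care is the iteration argument: one has to notice that a single equation of the form $t = stq$, together with the fact that $S$ is finite (so that some power of $s$ equals $e$), suffices, without needing the second $\gR$-equality $st = tp$. I expect the symmetric step for $\Lle$ and the verification that $e$ itself satisfies the conditions to be routine.
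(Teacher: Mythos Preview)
Your argument is correct and mirrors the paper's: both show that every $t$ with $st \Req t$ and $ts \Leq t$ satisfies $t \Hle s^{\omega_S}$ (the paper via left-stability of $\Req$ to get $s^{\omega_S} t \Req t$, you by iterating $t = stq$ directly---the same computation in different clothing) and that $s^{\omega_S}$ itself meets the conditions via $s^{\omega_S+1} \Heq s^{\omega_S}$. The one step you gloss over is the passage from ``$e$ dominates all candidates in $\Hle$'' to ``$e$ is the \emph{unique} $\Hle$-maximal candidate'': since $\Hle$ is only a preorder, another idempotent $t$ with $t \Heq e$ would also be maximal, and the paper closes this gap by invoking the standard fact that an $\gH$-class contains at most one idempotent---you should add that one line.
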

\begin{proof}
  Since $\Req$ is stable on the left, every element $t \in S$ with $st \Req t$
  satisfies $s^{i+1} t \Req s^i t$ for all $i \ge 0$. By transitivity, we
  obtain $s^i t \Req t$ for all $i \ge 0$. In particular, $s^{\omega_S} t \Req
  t$, and thus, $t \Rle s^{\omega_S}$. Symmetrically, $ts \Leq t$ yields $t
  \Lle s^{\omega_S}$. Therefore, we have $t \Hle s^{\omega_S}$ for every
  element $t \in S$ with $st \Req t$ and $ts \Leq t$.

  Now, since $s^{\omega_S + 1} \Heq s^{\omega_S}$, every $\Hle$-maximal element
  $t \in S$ with $st \Req t$ and $ts \Leq t$ satisfies $t \Heq s^{\omega_S}$.
  It is well-known (and easy to verify) that every $\gH$-class contains at most
  one idempotent element. Therefore, if $t$ is idempotent, we have $t =
  s^{\omega_S}$, as desired.
\end{proof}

It is now easy to see that the class of $\FOm$-definable classes generalizes
the class of pseudovarieties defined by a finite set of $\omega$-identities.
This idea is captured in the following proposition.

\begin{proposition}
  Every pseudovariety of finite semigroups defined by a finite set of
  $\omega$-identities is $\FOm$-definable.
  \label{prop:omega-identities}
\end{proposition}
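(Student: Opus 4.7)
The plan is to translate a given finite set of $\omega$-identities $\{U_1 = V_1, \ldots, U_k = V_k\}$ into an equivalent $\FOm$-sentence. My sentence will be the conjunction of $\varphi_\vS$ from Table~\ref{tab:intro} (to force the input structure to be a finite semigroup at all) together with one $\FOm$-formula $\psi_i$ per identity, expressing that every variable assignment makes the two sides of $U_i = V_i$ evaluate to the same element. Because we already know by Lemma~\ref{lem:bool} that Boolean combinations preserve $\FOm$-definability, it suffices to produce each $\psi_i$.

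The heart of the construction is, for each $\omega$-term $T$ over variables $x_1, \ldots, x_n$, a formula $\mathrm{val}_T(x_1, \ldots, x_n, t)$ asserting that $t$ is the value of $T$ under the given assignment. I would define $\mathrm{val}_T$ by structural induction on $T$. For a variable $T = x_j$, set $\mathrm{val}_T := (t = x_j)$. For a product $T = UV$, take $\mathrm{val}_T := \exists u \exists v \colon \mathrm{val}_U(\ldots, u) \land \mathrm{val}_V(\ldots, v) \land uv = t$. For an $\omega$-power $T = U^\omega$, I would first existentially introduce a value $u$ satisfying $\mathrm{val}_U(\ldots, u)$ and then express, via Lemma~\ref{lem:omega-green}, that $t$ is the $\Hle$-maximal idempotent with $ut \Req t$ and $tu \Leq t$; since Green's relations are $\FOm$-definable by Lemma~\ref{lem:green}, this condition is readily written in $\FOm$. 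The sentence $\psi_i$ is then $\forall x_1 \cdots \forall x_n \, \forall t \, \forall t' \colon (\mathrm{val}_{U_i}(\ldots, t) \land \mathrm{val}_{V_i}(\ldots, t')) \to t = t'$.

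Correctness follows by a straightforward induction on term structure: in any finite semigroup, the inductive clauses faithfully compute the value of each subterm, and $s^{\omega_S}$ is always defined, so the $\omega$-power clause indeed picks out the intended element. The construction turns a finite set of $\omega$-identities into a finite $\FOm$-sentence, as required.

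The only genuine obstacle is handling the $\omega$-operator inside a logic whose only predicate is binary multiplication. This obstacle is already defused by Lemma~\ref{lem:omega-green}, which gives a first-order characterization of $s^{\omega_S}$ using Green's relations, combined with Lemma~\ref{lem:green}, which shows that those relations are themselves $\FOm$-definable. With these two ingredients in hand, the translation above is essentially mechanical.
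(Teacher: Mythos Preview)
Your proposal is correct and follows essentially the same route as the paper's proof. The paper reduces to a single identity via Lemma~\ref{lem:bool}, then (in Lemma~\ref{lem:eq}) builds by structural induction on $\omega$-terms a formula expressing $U = y$ for a fresh variable~$y$, using Lemma~\ref{lem:omega-green} for the $\omega$-case exactly as you do; your $\mathrm{val}_T(\ldots,t)$ predicates are the same construction under a different name, and your final universal closure matches the paper's.
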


The statement follows immediately from the fact that the class of
$\FOm$-definable pseudovarieties is closed under intersection (thus, we can
confine ourselves to the case of a single $\omega$-identity) and from the
following lemma.

\begin{lemma}
  Let $U = V$ be an $\omega$-identity with variables $X = \os{x_1, \dots,
  x_k}$.  Then there exists an $\FOm$-formula $\varphi(x_1, \dots, x_k)$ such
  that for all finite semigroups $S$ and for all $s_1, \dots, s_k \in S$, we
  have $(S, s_1, \dots, s_k) \models \varphi(x_1, \dots, x_k)$ if and only if
  $U = V$ is satisfied by the mapping $h \colon X \to S$ with $h(x_i) =
  s_i$.\label{lem:eq}
\end{lemma}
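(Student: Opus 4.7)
My plan is to proceed by structural induction on $\omega$-terms. Specifically, I will show that for every $\omega$-term $T$ over the variables $X = \{x_1, \dots, x_k\}$, there exists an $\FOm$-formula $\varphi_T(x_1, \dots, x_k, y)$ such that for every finite semigroup $S$ and every assignment $h \colon X \to S$ with $h(x_i) = s_i$, we have $(S, s_1, \dots, s_k, t) \models \varphi_T$ if and only if $t = h(T)$. Once this is available, the desired formula is simply
\[
  \varphi(x_1, \dots, x_k) \;:=\; \exists y \colon \varphi_U(x_1, \dots, x_k, y) \land \varphi_V(x_1, \dots, x_k, y),
\]
which holds precisely when $h(U) = h(V)$.

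For the base case, when $T = x_i$, I would take $\varphi_T(x_1, \dots, x_k, y) := (y = x_i)$ (which is $\FOm$-expressible since equality can be defined, e.g.\ via the product predicate together with elementary logic). For a product term $T = U'V'$, assuming $\varphi_{U'}$ and $\varphi_{V'}$ have already been constructed, I set
\[
  \varphi_T(x_1, \dots, x_k, y) \;:=\; \exists u \exists v \colon \varphi_{U'}(x_1, \dots, x_k, u) \land \varphi_{V'}(x_1, \dots, x_k, v) \land uv = y.
\]

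The interesting case is $T = (U')^\omega$. Here I invoke Lemma~\ref{lem:omega-green}, which characterizes $h(U')^{\omega_S}$ as the unique $\Hle$-maximal idempotent $t \in S$ satisfying $h(U')\,t \Req t$ and $t\,h(U') \Leq t$. Since by Lemma~\ref{lem:green} Green's relations are themselves $\FOm$-definable, I can express this characterization directly: letting $\mathrm{Idem}(y) := \exists z \colon yy = z \land z = y$, I set
\[
  \varphi_T(\vec{x}, y) \;:=\; \exists u \colon \varphi_{U'}(\vec{x}, u) \land \mathrm{Idem}(y) \land \exists w \colon (uy = w \land w \Req y) \land \exists w' \colon (yu = w' \land w' \Lle y),
\]
together with an $\Hle$-maximality clause of the form $\forall y' \colon \bigl(\mathrm{Idem}(y') \land \exists \ldots (uy' \Req y') \land \exists \ldots (y'u \Lle y') \to y' \Hle y\bigr)$. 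By Lemma~\ref{lem:omega-green} this formula is satisfied by exactly one element, namely $h(U')^{\omega_S} = h(T)$.

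The main obstacle is exactly this $\omega$-case: the operator refers to an element defined via an unbounded iteration, which is prima facie not first-order. The key idea that makes the induction go through is Lemma~\ref{lem:omega-green}, which replaces the dynamical description by a purely relational one using Green's relations, combined with Lemma~\ref{lem:green} which guarantees those relations are themselves $\FOm$-definable. The product and variable cases are then completely routine, and the overall construction produces $\varphi$ by a straightforward induction on the term structure of $U$ and $V$.
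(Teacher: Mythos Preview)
Your proposal is correct and follows essentially the same route as the paper: both build, by structural induction on $\omega$-terms, a formula $\varphi_T(\vec{x},y)$ expressing $y = h(T)$, handle the product case by existentially quantifying the values of the subterms, and handle the $\omega$-case by invoking Lemma~\ref{lem:omega-green} together with the $\FOm$-definability of Green's relations (Lemma~\ref{lem:green}). The only slip is a typo in your displayed $\omega$-case formula, where you wrote $w' \Lle y$ instead of $w' \Leq y$ (you state the condition correctly in the surrounding prose); with that fixed, your argument matches the paper's.
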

\begin{proof}
  Without loss of generality, we may assume that $V = y$ for some variable $y$.
  The general case then follows by introducing a new variable $x_{k+1} \not\in
  X$, and combining formulas $\psi(x_1, \dots, x_{k+1})$ for the equation $U =
  x_{k+1}$ as well as $\chi(x_1, \dots, x_{k+1})$ for $V = x_{k+1}$ to obtain
  the formula $\varphi = \exists x_{k+1} \colon \psi(x_1, \dots, x_{k+1}) \land
  \chi(x_1, \dots, x_{k+1})$ for $U = V$.

  If $U$ also consists of a single variable $x$, we define $\varphi$ as $x =
  y$.
  If $U = W_1 W_2$ for $\omega$-terms $W_1$ and $W_2$, we apply induction on
  the length of $U$ to obtain formulas $\psi_1(x_1, \dots, x_k, x_{k+1})$ for
  the $\omega$-identity $W_1 = x_{k+1}$ and $\psi_2(x_1, \dots, x_k, x_{k+2})$
  for $W_2 = x_{k+2}$. Again, $x_{k+1}$ and $x_{k+2}$ are new variables not in
  $X$.
  We then let
  \begin{equation*}
    \varphi \, = \, \exists x_{k+1} \exists x_{k+2} \colon \psi_1(x_1, \dots, x_k, x_{k+1}) \land \psi_2(x_1, \dots, x_k, x_{k+2}) \land x_{k+1} x_{k+2} = y.
  \end{equation*}
  If $U$ has the form $W^\omega$ for an $\omega$-term $W$, we apply induction
  on the length of $U$ to obtain a formula $\psi(x_1, \dots, x_k, x_{k+1})$ for
  the $\omega$-identity $W = x_{k+1}$ where $x_{k+1}$ is a new variable not in
  $X$.
  We then define $\varphi$ as $\exists x_{k+1} \colon \psi(x_1, \dots, x_k,
  x_{k+1}) \land \xi(x_{k+1}, y)$ where $\xi$ is the formula
  \begin{align*}
    yy = y \land (x_{k+1} y \Req y) \land (yx_{k+1} \Leq y) \land \forall z \colon & y = z \lor zz \ne z \lor \lnot y \Hle z \lor {}\\
                                                                                   & \lnot(x_{k+1}z \Req z) \lor \lnot(zx_{k+1} \Leq z).
  \end{align*}
  The correctness of this construction follows from
  Lemma~\ref{lem:omega-green}.
\end{proof}

For an $\omega$-identity $U = V$ with variables $\os{x_1, \dots, x_k}$,
applying the lemma yields a formula $\varphi(x_1, \dots, x_k)$. The sentence
$\forall x_1 \cdots \forall x_k \colon \varphi(x_1, \dots, x_k)$ then defines
the pseudovariety defined by $U = V$.
The statement of the lemma is actually a bit stronger: it allows us to use
$\omega$-identities as abbreviations within $\FOm$-formulas.
In the following, we will often make use of this technique to improve the
readability of formulas.

Next, we would like to prove some more sophisticated closure properties of
$\FOm$-definable classes of finite semigroups.
Unfortunately, it seems impossible to obtain very general results:
in~\cite{Rhodes99} it was shown that there exist $\FOm$-definable
pseudovarieties of finite semigroups $\vV_1$, $\vV_2$ such that none of $\vV_1
\lor \vV_2$, $\vV_1 \malcev \vV_2$, $\vV_1 * \vV_2$ and $\vV_1 \tsp \vV_2$ are
decidable.
Since all $\FOm$-definable classes of finite semigroups are decidable
(actually, in $\AC^0$, as we shall see in the next section), this implies that
the class of $\FOm$-definable pseudovarieties is not closed under these
operations.
However, we will show that at least for many common operations already known to
preserve decidability, $\FOm$-decidability is preserved as well.

Remember that every non-empty subset $X$ of a finite semigroup $S$ forms a
partial semigroup with the multiplication induced by $S$.

\begin{proposition}
  Let $\vC$ be a $\FOm$-definable class of finite semigroups and let $\sim$ be
  a $\FOm$-definable binary relation on partial groupoids.
  Then the class of all finite semigroups $S$ with the property that all
  (non-empty) partial semigroups $\set{t \in S}{s \sim_S t}$ belong to~$\vC$ is
  $\FOm$-definable.
  \label{prop:filter}
\end{proposition}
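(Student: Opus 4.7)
The plan is to \emph{relativize} the defining sentence of $\vC$ to the subset $T_s := \{t \in S : s \sim_S t\}$, using an additional parameter variable $s$, and then to quantify universally over $s$. Fix an $\FOm$-sentence $\varphi_\vC$ defining $\vC$ and an $\FOm$-formula $\psi_\sim(x, y)$ defining $\sim$ (with $\psi_\sim(s, x)$ expressing $s \sim_S x$).

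For an arbitrary $\FOm$-formula $\chi$ and a fresh variable $s$ not occurring in $\chi$, I would define the relativization $\chi^{\sim s}$ by structural induction: atomic formulas $xy = z$ are left unchanged; negation and disjunction commute with the relativization; and each existential quantifier $\exists x \colon \chi'$ becomes $\exists x \colon \psi_\sim(s, x) \land (\chi')^{\sim s}$. Consequently, universal quantifiers expand to $\forall x \colon \psi_\sim(s, x) \to (\chi')^{\sim s}$. The desired sentence is then
$$\varphi \; := \; \forall s \colon \Big( (\exists x \colon \psi_\sim(s, x)) \to \varphi_\vC^{\sim s} \Big),$$
where the antecedent of the implication ensures that only non-empty subsets $T_s$ are tested.

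Correctness reduces to the following claim, proved by induction on $\chi$: for every subformula $\chi$ of $\varphi_\vC$ with free variables among $x_1, \dots, x_k$, every finite semigroup $S$, every $s \in S$, and every tuple $t_1, \dots, t_k \in T_s$, we have $(S, s, t_1, \dots, t_k) \models \chi^{\sim s}$ if and only if $(T_s, t_1, \dots, t_k) \models \chi$, where $T_s$ carries the induced partial-groupoid structure defined in the Preliminaries. Boolean cases are immediate, and the relativized quantifiers restrict the quantification range on the left exactly to the universe $T_s$ of the partial groupoid on the right. Applied to the sentence $\varphi_\vC$ itself, the claim yields $(S, s) \models \varphi_\vC^{\sim s}$ iff $T_s \in \vC$, and the guard in $\varphi$ then gives the proposition.

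The only delicate point is the base case of atomic formulas: for $t_1, t_2, t_3 \in T_s$, one must check that $t_1 t_2 = t_3$ holds in $S$ iff $t_1 \circ t_2 = t_3$ holds in $T_s$, where $\circ$ denotes the induced partial multiplication. This is immediate from the definition of $\circ$: if $t_1 t_2 = t_3$ in $S$ with $t_3 \in T_s$, then $t_1 t_2 \in T_s$ and hence $t_1 \circ t_2 = t_1 t_2 = t_3$; conversely, if $t_1 \circ t_2 = t_3$, then by definition $t_1 t_2 = t_3$ in $S$. Beyond this observation and the cosmetic renaming of bound variables of $\varphi_\vC$ to avoid clashes with $s$, no genuine obstacle is anticipated.
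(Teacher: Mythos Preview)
Your proposal is correct and follows essentially the same relativization strategy as the paper: replace each $\exists y\colon\chi$ by $\exists y\colon\psi(s,y)\land\chi$, prove the equivalence $(S,s,\bar t)\models\chi^{\sim s}\iff(T_s,\bar t)\models\chi$ by structural induction, and then universally quantify over $s$. The only cosmetic differences are that the paper additionally conjoins $\varphi_{\vS}$ to restrict to semigroups, while you add the non-emptiness guard $\exists x\colon\psi_\sim(s,x)$\,---\,a detail the paper's proof actually glosses over but which is needed for the ``(non-empty)'' qualifier in the statement.
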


\begin{proof}
  Suppose that $\varphi$ is a $\FOm$-sentence with $G \models \varphi$ if and
  only if $G \in \vC$ and suppose that $\psi(x, y)$ is a $\FOm$-formula such
  that for all finite partial groupoids $G$ and for all $g, h \in G$, we have
  $g \sim_G h$ if and only if $(G, g, h) \models \psi(x, y)$.
  We construct a $\FOm$-sentence $\varphi''$ such that $S \models \varphi''$
  whenever $S$ is a finite semigroup and all partial semigroups $T_s := \set{t
  \in S}{s \sim_S t}$ belong to~$\vC$, that is, $T_s \models \varphi$.

  Let $x$ be a variable not appearing in $\varphi$.
  We successively replace all subformulas of the form $\exists y \colon \chi$
  in $\varphi$ by $\exists y \colon \psi(x, y) \land \chi$ to obtain a formula
  $\varphi'$. Note that $x$ occurs freely in $\varphi'$.
  We then let $\varphi'' = \varphi_\vS \land \forall x \colon \varphi'(x)$.

  In order to prove correctness of the construction, we prove a slightly
  stronger statement: suppose that the original formula $\varphi$ contained
  additional free variables $z_1, \dots, z_k$. We show that then, for all
  finite semigroups $S$, for all $s \in S$ and for all $u_1, \ldots, u_k \in
  T_s$, we have $(S, s, u_1, \ldots, u_k) \models \varphi'(x, z_1, \dots, z_k)$
  if and only if $(T_s, u_1, \ldots, u_k) \models \varphi(z_1, \dots, z_k)$.

  If $\varphi$ is an atomic formula, the claim clearly holds.
  If $\varphi$ has the form $\chi_1 \lor \chi_2$ or $\lnot \chi$, the statement
  holds by induction.
  Suppose now that $\varphi(z_1, \dots, z_k) = \exists y \colon \chi(y, z_1,
  \dots, z_k)$ for some $\FOm$-formula $\chi$.
  Then,
  \begin{equation*}
    \varphi'(x, z_1, \dots, z_k) \,=\, \exists y \colon \psi(x, y) \land \chi'(x, y, z_1, \dots, z_k)
  \end{equation*}
  where, by induction, we have that $(S, s, t, u_1, \ldots, u_k) \models
  \chi'(x, y, z_1, \dots, z_k)$ if and only if $(T_s, t, u_1, \ldots, u_k)
  \models \chi(y, z_1, \dots, z_k)$.
  Using this and the definition of $\varphi'$, we know that $(S, s, u_1,
  \ldots, u_k) \models \varphi'(x, z_1, \dots, z_k)$ if and only if there
  exists some $t \in S$ with $s \sim_S t$ such that $(T_s, t, u_1, \ldots, u_k)
  \models \chi(y, z_1, \dots, z_k)$. By the definition of $\varphi$, the latter
  property is, in turn, equivalent to $(T_s, u_1, \ldots, u_k) \models
  \varphi(z_1, \dots, z_k)$, as desired.
\end{proof}

\begin{corollary}
  Let $\vV$ be a $\FOm$-definable pseudovariety of finite semigroups and let
  $\vH$ be a $\FOm$-definable pseudovariety of finite groups. Then $\vDV$,
  $\vLV$ and $\overline \vH$ are $\FOm$-definable.\label{crl:DLbar}
\end{corollary}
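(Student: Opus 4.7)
The plan is to apply Proposition~\ref{prop:filter} once for each of $\vDV$, $\vLV$ and $\overline{\vH}$, choosing in each case a $\FOm$-definable binary relation $\sim$ on partial groupoids so that the non-empty ``slices'' $T_s := \{t \in S \mid s \sim_S t\}$ are exactly the subsets whose membership in the target pseudovariety must be checked. Green's relations are $\FOm$-definable by Lemma~\ref{lem:green} and idempotence is expressible as $xx = x$, so all three relations can be written down directly.

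Concretely, for $\vDV$ I would take $\psi(x, y) := x \Jeq y \land \exists e \colon e \Jeq x \land ee = e$, making $T_s$ the regular $\gJ$-class of $s$ if one exists and $\emptyset$ otherwise. For $\vLV$ I would use $\psi(x, y) := xx = x \land \exists u \colon xux = y$, so that $T_s = sSs$ when $s$ is idempotent and $\emptyset$ otherwise. For $\overline{\vH}$, replacing $\gJ$ by $\gH$ in the $\vDV$-formula gives $\psi(x, y) := x \Heq y \land \exists e \colon e \Heq x \land ee = e$, whose non-empty slices are precisely the regular $\gH$-classes, each of which is well known to be a group. Proposition~\ref{prop:filter} is then invoked with $\vC = \vV$, $\vC = \vV$ and $\vC = \vH$, respectively.

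The one subtlety is that Proposition~\ref{prop:filter} evaluates the defining formula of $\vC$ on $T_s$ as a partial groupoid rather than on a subsemigroup of $S$. For $\vLV$ this is harmless because $sSs$ is always closed under multiplication, and for $\overline{\vH}$ every regular $\gH$-class is already a group and hence closed. For $\vDV$ a regular $\gJ$-class need not be closed, but in that case the partial groupoid $T_s$ fails $\varphi_\vGoid$; since any $\FOm$-sentence defining a pseudovariety of finite semigroups must entail $\varphi_\vS$, the filtered formula correctly rejects such an $S$, which matches the requirement in the definition of $\vDV$ that the regular $\gJ$-classes actually form subsemigroups. The main work therefore lies in pinning down the three relations $\sim$; everything else is an immediate appeal to Proposition~\ref{prop:filter}.
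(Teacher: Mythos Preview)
Your proposal is correct and follows essentially the same route as the paper: apply Proposition~\ref{prop:filter} three times with $\FOm$-definable relations whose non-empty slices are, respectively, the regular $\gJ$-classes, the local monoids $eSe$, and the regular $\gH$-classes. The paper's choices of $\sim$ differ only cosmetically (it uses $xx = x$ rather than your existential idempotent witness, so slices are indexed by idempotents instead of all elements of regular classes), and it does not spell out the non-closed-$\gJ$-class subtlety you raise, which is already absorbed by the way Proposition~\ref{prop:filter} evaluates the defining sentence on $T_s$ as a partial groupoid.
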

\begin{proof}
  To see that $\vDV$ is $\FOm$-definable, we define a relation on finite
  partial groupoids $\sim$ by $x \sim_G y$ if and only if $xx = x \land x \Jeq
  y$. By Lemma~\ref{lem:green}, this relation on partial groupoids if
  $\FOm$-definable.
  Moreover, given a finite semigroup $S$, the non-empty sets $\set{t \in S}{s
  \sim_S t}$ correspond to the regular $\gJ$-classes of $S$. Thus, $\vDV$ is
  $\FOm$-definable by Proposition~\ref{prop:filter}.
  The same arguments can be used for $\vLV$ (with $x \sim_G y$ if $xx = x \land
  \exists z \colon xzx = y$) and for $\overline \vH$ (with $x \sim_G y$ if $xx
  = x \land x \Heq y$).
\end{proof}

A second series of closure properties follows from the following result.

\begin{proposition}
  Let $\vC$ be a $\FOm$-definable class of finite semigroups and let $\sim$
  be a $\FOm$-definable semigroup congruence on partial groupoids.
  Then the class of all finite semigroups $S$ with $S/{\sim_S} \in \vC$ is
  $\FOm$-definable.
  \label{prop:congruence}
\end{proposition}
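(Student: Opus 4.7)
The plan is to imitate the relativization construction from Proposition~\ref{prop:filter}, but instead of restricting quantifiers to a single $\sim$-class, I translate each subformula of the defining sentence for $\vC$ so that it expresses the corresponding property of the quotient structure. Assume $\vC$ is defined by a $\FOm$-sentence $\varphi$ and $\sim$ is defined by a $\FOm$-formula $\psi(x,y)$.

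The syntactic translation $\varphi \mapsto \varphi'$ is given by induction on subformulas. Atomic formulas $xy = z$ are replaced by $\exists u \colon xy = u \land \psi(u,z)$, which, in a finite semigroup, amounts to saying that $xy$ is $\sim$-equivalent to $z$; atomic equalities $x = y$ (used as abbreviations) are replaced by $\psi(x,y)$; Boolean connectives are kept unchanged; and quantifiers $\exists y \colon \chi$ and $\forall y \colon \chi$ are replaced by $\exists y \colon \chi'$ and $\forall y \colon \chi'$, without any bounding predicate. No such predicate is needed, since every $\sim_S$-class of a finite semigroup $S$ is non-empty and any element of $S$ may serve as a representative of its class in the quotient. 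I then set $\varphi'' = \varphi_\vS \land \varphi'$.

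To establish correctness, I would prove by structural induction the strengthened claim that, for every subformula $\chi(z_1, \dots, z_k)$ of $\varphi$, for every finite semigroup $S$, and for all $s_1, \dots, s_k \in S$,
\[
  (S, s_1, \dots, s_k) \models \chi'(z_1, \dots, z_k) \iff (S/{\sim_S}, [s_1], \dots, [s_k]) \models \chi(z_1, \dots, z_k),
\]
where $[s]$ denotes the $\sim_S$-class of $s$. The atomic and Boolean cases are immediate from the definitions of $\sim_S$ and of the quotient multiplication. In the existential case, the forward direction instantiates the existential witness by some $t \in S$ and projects it to $[t]$, while the backward direction lifts a witnessing class to an arbitrary representative; the universal case follows by duality. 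Both directions depend on the invariance of the translated atomic formulas under the choice of representatives.

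The only genuine subtlety lies in the atomic step: the translation of $xy = z$ must correctly mirror the identity $[x] \cdot [y] = [z]$ in $S/{\sim_S}$, and this equivalence is precisely the requirement that $\sim_S$ be a semigroup congruence on $S$. This is exactly where the hypothesis on $\sim$ is used, and also why the conjunct $\varphi_\vS$ is included in $\varphi''$: on structures that fail to be finite semigroups, $\sim$ need not define a congruence, and the translation would not faithfully reflect a quotient. Since both $\varphi_\vS$ and the successive replacements preserve $\FOm$-definability (using Lemma~\ref{lem:bool} for the Boolean steps), $\varphi''$ is an $\FOm$-sentence defining the desired class.
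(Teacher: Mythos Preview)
Your proposal is correct and follows essentially the same approach as the paper: replace each atomic multiplication formula $xy = z$ by $\exists u\colon xy = u \land \psi(u,z)$, leave quantifiers and Boolean connectives untouched, conjoin with $\varphi_\vS$, and verify the translation by structural induction on subformulas with free variables. Your explicit treatment of equality atoms $x = y \mapsto \psi(x,y)$ is a small addition the paper does not spell out, but it is entirely in the same spirit and the overall argument is the same.
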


\begin{proof}
  Suppose that $\varphi$ is a $\FOm$-sentence with $S \models \varphi$ if and
  only if $S \in \vC$ and suppose that $\psi(x, y)$ is a $\FOm$-formula such
  that for all partial groupoids $G$ and for all $g, h \in G$, we have $g
  \sim_G h$ if and only if $(G, g, h) \models \psi(x, y)$.
  We construct a $\FOm$-sentence $\varphi''$ such that $S \models \varphi''$
  whenever $S$ is a finite semigroup and $S/{\sim_S} \in \vC$, that is,
  $S/{\sim_S} \models \varphi$.

  A formula $\varphi'$ is obtained by successively replacing all atomic
  formulas of the form $xy = z$ in $\varphi$ with a new sub-formula $\exists w
  \colon xy = w \land \psi(w, z)$ where $w$ is a fresh variable not appearing
  in $\varphi$.
  We then let $\varphi'' = \varphi_\vS \land \varphi'$.

  As in the proof of Proposition~\ref{prop:filter}, we prove a slightly
  stronger statement with additional free variables: we show that for every
  finite semigroup $S$ and for all $s_1, \dots, s_k \in S$, one has $(S, s_1,
  \dots, s_k) \models \varphi'(z_1, \dots, z_k)$ if and only if $(S /
  {\sim_S}, [s_1], \dots, [s_k]) \models \varphi(z_1, \dots, z_k)$. Here, for
  an element $s \in S$, we use $[s]$ to denote the equivalence class of $s$
  modulo $\sim_S$.

  If $\varphi$ is an atomic formula $xy = z$, then the quotient $S / {\sim_S}$
  satisfies $\varphi$ if and only if classes $[s]$, $[t]$ and $[u]$ are
  assigned to $x, y, z$ such that $[s] [t] = [u]$, \ie{}$st \sim_S u$. This is
  equivalent to the existence of some element $r \in S$ such that $st = r$ and
  $r \sim_S u$, which is exactly what $\varphi' = \exists w \colon xy = w \land
  \psi(w, z)$ expresses.
  Induction yields the desired statement if $\varphi$ has the form $\chi_1 \lor
  \chi_2$, $\lnot \chi$ or $\exists y \colon \chi$.
\end{proof}

\begin{corollary}
  Let $\vV$ be a $\FOm$-definable pseudovariety of finite semigroups.
  Then $\vK \malcev \vV$, $\vD \malcev \vV$, $\vNil \malcev \vV$, $\vLI \malcev
  \vV$, $\vLG \malcev \vV$ and are $\FOm$-definable.
  \label{crl:malcev}
\end{corollary}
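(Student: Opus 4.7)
The plan is to apply Proposition~\ref{prop:congruence} in each of the five cases. For each pseudovariety $\mathbf{W}$ appearing on the left of the Mal'cev product, the goal is to construct a $\FOm$-definable semigroup congruence $\sim$ on partial groupoids with the property that for every finite semigroup $S$, we have $S \in \mathbf{W} \malcev \vV$ if and only if $S/{\sim_S} \in \vV$. Once such a congruence is in hand for each of $\vK$, $\vD$, $\vNil$, $\vLI$, $\vLG$, Proposition~\ref{prop:congruence} applied to $\vV$ and to the corresponding $\sim$ immediately yields $\FOm$-definability of the Mal'cev product.

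For each of the five pseudovarieties in question, the existence of such a canonical congruence is classical and rests on the fact that $\vK$, $\vD$, $\vNil$, $\vLI$ and $\vLG$ are all ``local'' pseudovarieties: their defining (pseudo-)identities constrain only the interaction of idempotents with a single neighboring variable. For $\vK$ and $\vD$ the associated congruences force each idempotent $e$ to be absorbed from the right (resp.\ left) by every element of its equivalence class; $\sim_\vNil$ is the smallest congruence containing both; $\sim_\vLI$ collapses sandwich patterns $exe$ down to $e$ for $e$ idempotent; and $\sim_\vLG$ collapses the non-group structure above each local monoid $eSe$. Each such relation is straightforwardly first-order once Green's relations are available (Lemma~\ref{lem:green}) and the $\omega$-operator is expressible (Lemma~\ref{lem:omega-green}).

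The main obstacle I anticipate is not in writing the formulas but in two verification steps: first, checking that the candidate relation is a semigroup congruence on every finite semigroup, which is a short algebraic exercise; and second, proving that the quotient actually characterizes the Mal'cev product, which is a classical fact about these particular ``local'' pseudovarieties and can be invoked from~\cite{alm94,rs09qtheory}. A subtle point is that natural definitions of Mal'cev-product congruences are often inductive (as the congruence generated by certain basic pairs), and such transitive closures are not $\FOm$-definable in general; for these five specific $\mathbf{W}$, however, the generated congruence always collapses to a bounded-depth first-order condition on finite semigroups, and arranging this bounded form is the real technical step.
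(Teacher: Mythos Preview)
Your plan matches the paper's approach exactly: apply Proposition~\ref{prop:congruence} to congruences characterizing each Mal'cev product, with the characterizations drawn from the literature. The paper simply cites \cite[Theorem~4.6.50]{rs09qtheory} for explicit first-order descriptions of the radical congruences $\rRM$, $\rLM$, $\rRM \cap \rLM$, $\rGGM$, $\rAGGM$ (corresponding to $\vK$, $\vD$, $\vNil$, $\vLI$, $\vLG$ respectively) and observes via Lemmas~\ref{lem:eq} and~\ref{lem:green} that these are manifestly $\FOm$-formulas, so the ``collapse to bounded form'' you flag as the real technical step is already absorbed by the reference; note also that for $\vNil$ the correct congruence is the \emph{intersection} of the $\vK$- and $\vD$-radicals (since $\vNil \subseteq \vK \cap \vD$ forces the $\vNil$-radical to be finer than both), not the smallest congruence containing both as you wrote.
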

\begin{proof}
  It is known~\cite[Theorem 4.6.50]{rs09qtheory} that
  $S \in \vK \malcev \vV$ if and only if $S / {\rRM} \in \vV$,
  $S \in \vD \malcev \vV$ if and only if $S / {\rLM} \in \vV$,
  $S \in \vNil \malcev \vV$ if and only if $S / (\rRM \intersect \rLM) \in \vV$,
  $S \in \vLI \malcev \vV$ if and only if $S / {\rGGM} \in \vV$, and
  $S \in \vLG \malcev \vV$ if and only if $S / {\rAGGM} \in \vV$
  where $\rRM$, $\rLM$, $\rGGM$ and $\rAGGM$ are the semigroup congruences on
  partial groupoids defined by
  \begin{align*}
    s \rRM t & \quad\text{if}\quad \forall x \colon \rho(x) \to \big(\lnot (xs \Jeq x \lor xt \Jeq x) \lor xs = xt\big), \\
    s \rLM t & \quad\text{if}\quad \forall x \colon \rho(x) \to \big(\lnot (sx \Jeq x \lor tx \Jeq x) \lor sx = tx\big), \\
    s \rGGM t & \quad\text{if}\quad \forall x \forall y \colon (\rho(x) \land x \Jeq y) \to \big(\lnot (xsy \Jeq x \lor xty \Jeq x) \lor xsy = xty\big), \\
    s \rAGGM t & \quad\text{if}\quad \forall x \forall y \colon (\rho(x) \land x \Jeq y) \to \big(xsy \Jeq x \leftrightarrow xty \Jeq x),
  \end{align*}
  using the abbreviation $\rho(x) = (\exists e \colon ee = e \land e \Jeq x)$
  to express that the $\gJ$-class of $x$ is regular.
  In view of Lemma~\ref{lem:eq} and Lemma~\ref{lem:green}, it follows
  immediately that each of the four semigroup congruences is $\FOm$-definable.
  The statement now is a consequence of Proposition~\ref{prop:congruence}.
\end{proof}

\section{Deciding Membership to First-Order Definable Classes}
\label{sec:complexity}

Our main motivation for introducing $\FOm$ over partial groupoids was providing
efficient algorithms for testing membership.
The connection between these logical formalisms and efficient membership
testing is established in this section.

\begin{proposition}
  If $\vC$ is a $\FOm$-definable class of finite partial groupoids, then the
  membership problem for $\vC$ is in $\AC^0$.
  \label{prop:ac0}
\end{proposition}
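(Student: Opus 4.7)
The plan is to build, for each fixed $\FOm$-sentence $\varphi$ defining $\vC$, a family of circuits $\{C_N\}_{N \ge 1}$ of constant depth and polynomial size that decides, on input the multiplication table of a partial groupoid $G$ of cardinality $N$, whether $G \models \varphi$. The construction proceeds by structural induction on~$\varphi$, simultaneously for all formulas with free variables $x_1, \dots, x_k$; the induction hypothesis produces, for each tuple $(a_1, \dots, a_k) \in \{1, \dots, N\}^k$, a subcircuit $C_N^{\varphi, a_1, \dots, a_k}$ of depth $O(\mathrm{depth}(\varphi))$ and size $O(N^{q(\varphi)})$, where $q(\varphi)$ is the quantifier depth of~$\varphi$.

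For the base case, an atomic formula $xy = z$ with a fixed assignment $(a, b, c)$ must check whether the table entry at position $(a, b)$, which occupies a specific block of $\lceil \log(N+1) \rceil$ input bits at a position determined by $a$ and $b$, equals the binary representation of $c$. This is a constant-depth circuit: a bit-by-bit comparison, \ie{} an $\mathrm{AND}$ of $\mathrm{XNOR}$s (with hardwired constants for $c$'s bits). Note that if the table entry encodes a value outside $\{1, \dots, N\}$, the comparison simply fails, which correctly captures the semantics of undefined products. For the inductive step, $\lnot \varphi$ adds a $\mathrm{NOT}$ gate, $\varphi \lor \psi$ adds an $\mathrm{OR}$ gate, and $\exists x \colon \varphi(x, \bar z)$ with a fixed assignment to $\bar z$ is realized by a single unbounded fan-in $\mathrm{OR}$ over the $N$ subcircuits for all possible values $a \in \{1, \dots, N\}$ assigned to~$x$. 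Universal quantification is handled dually with an $\mathrm{AND}$ gate.

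The final circuit $C_N$ is simply $C_N^{\varphi}$ (no free variables). The depth is bounded by a constant depending only on $\varphi$, not on $N$: each Boolean connective and each quantifier adds $O(1)$ levels, and the atomic checks at the leaves have depth $O(1)$. The size is $O(N^{q(\varphi)})$ times the size of a leaf comparison gadget, hence polynomial in the input length. Since the cardinality $N$ of the groupoid is recoverable from the input length, a separate circuit can be supplied for each input size in the usual non-uniform $\AC^0$ sense (and in fact the construction is straightforwardly $\DLOGTIME$-uniform, since the wiring pattern is computable directly from $\varphi$, $N$, and the addresses of the relevant table entries).

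There is no substantial obstacle: the proof is essentially the standard circuit realization of first-order evaluation, and the only point requiring any care is the treatment of invalid table entries, which is handled transparently by the bit-comparison in the atomic case. The restriction to the single predicate $\cdot$ is, if anything, helpful, since no numeric predicates need to be simulated.
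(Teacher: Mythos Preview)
Your proof is correct and takes essentially the same approach as the paper: both translate the fixed $\FOm$-sentence into a constant-depth, polynomial-size circuit, with quantifiers becoming unbounded fan-in $\mathrm{AND}/\mathrm{OR}$ gates and the atomic predicate handled by a direct table-lookup comparison. The only cosmetic difference is that the paper delegates the structural-induction step to a citation (Barrington--Immerman--Straubing) and spells out only the atomic lookup circuit (with $g,h,k$ as circuit inputs rather than hardwired constants), whereas you unfold both steps explicitly.
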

\begin{proof}
  Let $\varphi$ be a fixed $\FOm$-sentence.
  By~\cite{BarringtonIS90}, the problem of testing whether a given finite
  partial groupoid $G$ of cardinality $N$ satisfies $G \models \varphi$ is
  $\AC^0$-reducible to testing $gh = k$ for given elements $g, h, k \in G$.

  An unbounded fan-in constant-depth Boolean circuit for testing $gh = k$ works
  as follows: for each bit of the multiplication table in the input, we create
  an AND gate (with NOT gates at some of the incoming wires). The incoming
  wires to this AND gate are connected to the respective bit of the
  multiplication table as well as to all bits of the inputs $g$ and $h$; some
  of the bits corresponding to $g$ and $h$ are negated such that the gate
  evaluates to $1$ if and only if the corresponding multiplication table bit is
  $1$ and the inputs $(g, h)$ correspond to the row and column indices of the
  multiplication table entry this bit belongs to. With $\ceil{\log N}$
  additional OR gates, the outputs of these AND gates can be combined to
  compute the product $gh$. The result is then compared to $k$ using
  $2\ceil{\log N} + 1$ AND gates and $\ceil{\log N}$ OR gates. The resulting
  circuit has size $(N^2 + 4)\ceil{\log N} + 1$.
  Thus, if $\varphi$ contains only atomic formulas of the form $xy = z$, the
  membership problem is in $\AC^0$.
\end{proof}

Since $\Parity$ is not contained in $\AC^0$, we obtain the following corollary.
\begin{corollary}
  The membership problem of a $\FOm$-definable class of finite partial
  semigroups (in particular, a pseudovariety defined by a finite set of
  $\omega$-identities) cannot be hard for any complexity class containing
  $\Parity$, such as $\ACC^0$, $\TC^0$, $\LOGSPACE$ or $\NL$.
  \label{crl:omega-basis}
\end{corollary}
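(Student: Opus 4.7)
The plan is to derive the corollary as a direct consequence of Proposition~\ref{prop:ac0} combined with the classical H{\aa}stad--Yao lower bound on $\Parity$. First I would recall that, by Proposition~\ref{prop:ac0}, the membership problem of any $\FOm$-definable class $\vC$ of finite partial groupoids lies in $\AC^0$; in particular, by Proposition~\ref{prop:omega-identities}, this applies to every pseudovariety of finite semigroups defined by finitely many $\omega$-identities. So it suffices to argue that no problem in $\AC^0$ can be hard for a complexity class that contains $\Parity$.

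The key observation is that the standard notion of hardness used here is with respect to $\AC^0$-reductions (or any weaker class of reductions, such as $\DLOGTIME$-uniform projections), and $\AC^0$ is closed under $\AC^0$-reductions. Thus, if membership in $\vC$ were hard under $\AC^0$-reductions for some complexity class $\mathcal{D} \supseteq \{\Parity\}$, then since $\Parity \in \mathcal{D}$, there would exist an $\AC^0$-reduction from $\Parity$ to the membership problem for $\vC$. Composing this reduction with the $\AC^0$ circuit family deciding membership in $\vC$ would yield an $\AC^0$ circuit family for $\Parity$, contradicting the H{\aa}stad--Yao theorem already cited in the preliminaries.

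I would phrase the argument as a short contrapositive: assume that membership in $\vC$ is hard for $\mathcal{D}$ with $\Parity \in \mathcal{D}$; conclude $\Parity \in \AC^0$; obtain a contradiction with~\cite{Hastad86,Yao85}. The specific listed classes $\ACC^0$, $\TC^0$, $\LOGSPACE$ and $\NL$ all contain $\Parity$ (this is standard and can be cited without reproof), so the corollary applies verbatim to them.

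The only potentially delicate point, and the place I would be most careful, is to fix the reduction model explicitly so that the closure statement ``$\AC^0$ is closed under the reductions with respect to which hardness is considered'' is unambiguous; the natural choice is $\DLOGTIME$-uniform $\AC^0$-reductions, consistent with the uniformity conventions in Proposition~\ref{prop:ac0}. Beyond that, there is no real obstacle: the corollary is a one-line consequence of the preceding proposition.
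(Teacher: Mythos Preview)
Your proposal is correct and matches the paper's approach exactly: the paper states this corollary without a separate proof, deriving it immediately from Proposition~\ref{prop:ac0} together with the H{\aa}stad--Yao lower bound that $\Parity \notin \AC^0$. Your contrapositive argument (hardness for a class containing $\Parity$ would force $\Parity \in \AC^0$) is precisely the intended one-line justification, and your care about fixing the reduction model to $\AC^0$-reductions is appropriate and consistent with how the paper uses hardness elsewhere (e.g., Theorem~\ref{thm:l-hard}).
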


This yields a new technique for proving that a pseudovariety is not defined by
finitely many $\omega$-identities (by showing the stronger statement of not
even being $\FOm$-definable). We apply this technique to $\vEA$ which is
already known to be non-finitely based by~\cite{Volkov93}.
In our proof, we construct a so-called \emph{Rees matrix semigroup}.
Such a semigroup is specified by a finite group $G$ (the so-called
\emph{structure group}), two disjoint finite sets $A, B$ (called \emph{index
sets}) and a mapping $C \colon B \times A \to G \union \os{0}$ (called
\emph{sandwich matrix}). It is then defined as the set $A \times G \times B
\union \os{0}$, equipped with the multiplication
\begin{equation*}
  (a_1, g_1, b_1) (a_2, g_2, b_2) = 
    \begin{cases}
      (a_1, g_1 C(b_1, a_2) g_2, b_2) & \text{if $C(b_1, a_2) \ne 0$} \\
      0 & \text{otherwise.}
    \end{cases}
\end{equation*}
The element $0$ is a zero element.

\begin{theorem}
  The membership problem for the pseudovarieties $\vEA$, $\vA \lor \vG$ and
  $\vA * \vG$ is $\LOGSPACE$-hard (under $\AC^0$ reductions).
  \label{thm:l-hard}
\end{theorem}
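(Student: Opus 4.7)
The plan is to reduce undirected $st$-connectivity, which is $\LOGSPACE$-complete under $\AC^0$ reductions by Reingold's theorem, to the \emph{complement} of membership in each of the three pseudovarieties; since $\LOGSPACE$ is closed under complementation, this yields the claimed hardness. Given an undirected graph $G = (V, E)$ and distinguished vertices $s, t \in V$, I construct a Rees matrix semigroup $M_G = M^0(H; V, V; C)$ over $H = \mathbb{Z}/2\mathbb{Z} = \os{1, g}$, with sandwich matrix $C(v, v) = 1$ for every $v \in V$, $C(u, v) = C(v, u) = 1$ for every edge $\os{u, v} \in E$, $C(t, s) = g$, and $C(b, a) = 0$ elsewhere. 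Producing the multiplication table of $M_G$ from the adjacency matrix of $G$ is plainly in $\AC^0$, since each output bit depends on only a constant number of input bits.

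The analysis goes through Graham's incidence graph of $M_G$: the bipartite graph on two disjoint copies of $V$, with an edge between $a$ and $b$ labeled by $C(b, a)$ whenever $C(b, a) \ne 0$. By a classical result of Graham (and in the form used by Mashevitzky and Volkov), $\langle E(M_G)\rangle$ contains a non-trivial subgroup of $H$, equivalently $M_G \notin \vEA$, if and only if some connected component of the incidence graph admits a closed walk whose sandwich-matrix labels multiply to a non-identity element of $H$. Since only the edge from $a_s$ to $b_t$ carries the non-trivial label $g$, and all others carry label $1$, this condition holds iff the $g$-edge lies on a simple cycle, \ie iff this edge is not a bridge. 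The entries $C(v, v) = 1$ let one bounce between the $A$- and $B$-sides at every vertex, so $a_u$ and $a_v$ lie in the same component of the incidence graph iff $u$ and $v$ are connected in $G$. Therefore the $g$-edge is a bridge iff $s$ and $t$ lie in different components of $G$, and we conclude that $M_G \in \vEA$ iff $s, t$ are not connected in $G$, which proves $\LOGSPACE$-hardness for $\vEA$.

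For $\vA * \vG$ and $\vA \lor \vG$, I invoke the chain $\vA \lor \vG \subseteq \vA * \vG \subseteq \vEA$: the first inclusion is immediate, and the second is classical (the idempotent-generated subsemigroup of a semidirect product of an aperiodic semigroup by a group is aperiodic). Thus, when $s, t$ are connected in $G$, we have $M_G \notin \vEA$ and therefore $M_G$ lies in none of the three pseudovarieties. Conversely, when $s, t$ are disconnected, the $g$-edge is a bridge; choosing a spanning forest of the incidence graph and applying the corresponding coboundary-style change of basis, I rewrite $M_G$ as an isomorphic Rees matrix semigroup whose sandwich matrix takes only the values $0$ and $1$. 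The resulting semigroup is a subdirect product of the $0$-group over $H$ and an aperiodic completely $0$-simple semigroup with the same incidence structure, hence lies in $\vA \lor \vG$.

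The main obstacle I anticipate is this last step: making the normalization and decomposition of $M_G$ in the disconnected case precise, since membership in $\vA \lor \vG$ for completely $0$-simple semigroups is a more delicate condition than for $\vEA$. The incidence-graph framework developed by Graham and refined by Mashevitzky and Volkov should nonetheless provide the natural tools to carry it out.
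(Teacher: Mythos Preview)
Your construction and incidence-graph analysis for $\vEA$ are essentially the paper's proof: the paper builds the same Rees matrix semigroup over $C_2$ (with the minor difference that it sets both $C(s,t)$ and $C(t,s)$ to the non-trivial element, whereas you set only one; either works). Like you, the paper assumes $s\ne t$ and $\os{s,t}\notin E$, which you should state explicitly since otherwise your definition of $C(t,s)$ conflicts with the edge rule.

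The only substantive divergence is the treatment of $\vA\lor\vG$ and $\vA*\vG$. The paper does not argue via normalization and decomposition; it simply cites that for completely $0$-simple semigroups the three conditions $S\in\vEA$, $S\in\vA\lor\vG$, $S\in\vA*\vG$ are equivalent (Rhodes--Steinberg, Theorem~4.13.31 and Corollary~4.13.32), so the single computation for $\vEA$ settles all three cases at once. Your route re-derives a special case of this equivalence, which is fine, but your final sentence needs tightening: the normalized semigroup $M^0(H;V,V;C')$ with $0/1$-matrix $C'$ is \emph{not} a subdirect product of $H^0$ and the aperiodic $M^0(\{1\};V,V;C')$, because the obvious map $(a,h,b)\mapsto (h,(a,1,b))$ fails to be a homomorphism whenever a product falls to $0$. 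What is true is that $M^0(H;V,V;C')$ \emph{divides} $H^0\times M^0(\{1\};V,V;C')$ (take the subsemigroup generated by the image together with $H\times\{0\}$ and collapse the second ideal), and $H^0$ itself is a quotient of $H\times U_1$ with $U_1$ the two-element semilattice, hence lies in $\vA\lor\vG$. With these two adjustments your argument goes through; alternatively, citing the Rhodes--Steinberg equivalence as the paper does is quicker.
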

\begin{proof}
  We reduce the complement of the reachability problem in undirected graphs to
  membership in $\vEA$, \ie{}we describe how to convert an undirected graph $G
  = (V, E)$ and vertices $s, t \in V$ into a Rees matrix semigroup $S$ such
  that $S \in \vEA$ if and only if $t$ is not reachable from $s$ in~$G$.
  Without loss of generality, we may assume that $s \ne t$ and $\os{s, t} \not
  \in E$.

  The structure group of $S$ is the cyclic group $C_2 = \os{1, a}$ with $a^2 =
  1$, both index sets are~$V$, and the sandwich matrix $C \colon V \times V \to
  C_2 \union \os{0}$ is given by
  \begin{equation*}
    C(v, w) =
      \begin{cases}
        1 & \text{if $v = w$ or $\os{v, w} \in E$,} \\
        a & \text{if $\os{v, w} = \os{s, t}$,} \\
        0 & \text{otherwise.}
      \end{cases}
  \end{equation*}

  The idempotent elements of $S$ are $(v, 1, v), (v, 1, w)$ with $\os{v, w} \in
  E$, as well as $(s, a, t)$, $(t, a, s)$ and the zero element $0$.
  It is clear that the reduction can be computed by $\AC^0$ circuits.
  The correctness of the construction follows almost immediately from Graham's
  Theorem; see \eg\cite[Theorem 4.13.30]{rs09qtheory}.
  A self-contained correctness proof can be found in the appendix.

  A Rees matrix semigroup belongs to $\vEA$ if and only if it belongs to $\vA
  \lor \vG$, if and only if it belongs to $\vA * \vG$; see
  \eg\cite[Theorem~4.13.31]{rs09qtheory}
  and~\cite[Corollary~4.13.32]{rs09qtheory}.
\end{proof}

\begin{corollary}
  None of the pseudovarieties $\vEA$, $\vA \lor \vG$ and $\vA * \vG$ can be
  described by a finite set of $\omega$-identities.
\end{corollary}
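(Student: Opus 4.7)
The plan is to combine Theorem~\ref{thm:l-hard} with Corollary~\ref{crl:omega-basis} in a direct chain of implications. By Proposition~\ref{prop:omega-identities}, any pseudovariety described by a finite set of $\omega$-identities is $\FOm$-definable, and by Proposition~\ref{prop:ac0}, the membership problem of any such class lies in $\AC^0$. Hence, assuming for contradiction that one of $\vEA$, $\vA \lor \vG$, $\vA * \vG$ were described by a finite set of $\omega$-identities, the membership problem for that pseudovariety would belong to $\AC^0$.

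However, Theorem~\ref{thm:l-hard} asserts that each of these three pseudovarieties is $\LOGSPACE$-hard under $\AC^0$ reductions. I would then note that $\Parity$ lies in $\LOGSPACE$, so $\LOGSPACE$-hardness under $\AC^0$ reductions transfers to $\Parity$: an $\AC^0$ reduction of $\Parity$ to the membership problem, combined with an $\AC^0$ decision procedure for the latter, would yield an $\AC^0$ circuit family for $\Parity$ (using the fact that $\AC^0$ is closed under $\AC^0$ reductions). This contradicts the H{\aa}stad--Yao lower bound.

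The argument really has no difficulty; indeed, it is subsumed by Corollary~\ref{crl:omega-basis}, which already packages precisely the implication ``$\FOm$-definable $\Rightarrow$ not hard for any class containing $\Parity$''. I would therefore simply state: by Theorem~\ref{thm:l-hard}, each of $\vEA$, $\vA \lor \vG$, $\vA * \vG$ has a $\LOGSPACE$-hard membership problem, and since $\LOGSPACE$ contains $\Parity$, Corollary~\ref{crl:omega-basis} forbids any such pseudovariety from being $\FOm$-definable, in particular from being described by a finite set of $\omega$-identities. No step is a genuine obstacle; the corollary is purely a packaging of prior results.
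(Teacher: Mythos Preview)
Your proposal is correct and matches the paper's intent: the corollary is stated without proof precisely because it is an immediate consequence of combining Theorem~\ref{thm:l-hard} with Corollary~\ref{crl:omega-basis}, exactly as you outline. There is nothing to add.
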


\begin{corollary}
  The $\oE (\cdot)$-operator does not preserve $\FOm$-definability.
\end{corollary}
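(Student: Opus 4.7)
The plan is to exhibit a single $\FOm$-definable pseudovariety $\vV$ such that $\vEV$ fails to be $\FOm$-definable, using the pseudovariety $\vA$ and the complexity-theoretic obstruction just established.

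First I would verify that $\vA$ itself is $\FOm$-definable. By Lemma~\ref{lem:green}, Green's relation $\gH$ is $\FOm$-definable, so the aperiodicity condition ``$x \Heq y$ implies $x = y$'' translates directly into the sentence $\varphi_\vS \land \forall x \forall y \colon (x \Heq y) \to x = y$. Hence $\vA$ belongs to the class of $\FOm$-definable pseudovarieties.

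Next I would combine this with Theorem~\ref{thm:l-hard}, which shows that membership in $\vEA$ is $\LOGSPACE$-hard under $\AC^0$ reductions. If $\vEA$ were $\FOm$-definable, then by Proposition~\ref{prop:ac0} its membership problem would be decidable in $\AC^0$, and consequently every problem $\AC^0$-reducible to it (in particular $\Parity$, via the $\LOGSPACE$-hardness) would lie in $\AC^0$. This contradicts the classical $\AC^0$-lower bound for $\Parity$ by H{\aa}stad and Yao, as already spelled out in Corollary~\ref{crl:omega-basis}. Therefore $\vEA$ is not $\FOm$-definable.

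Putting the two observations together: $\vA$ is $\FOm$-definable, but $\oE(\vA) = \vEA$ is not. Hence the $\oE(\cdot)$-operator does not preserve $\FOm$-definability, which is exactly the claim. There is no real obstacle here---the whole argument is an immediate composition of the preceding results---so the write-up is essentially a one-line deduction once $\vA \in \FOm$ is noted.
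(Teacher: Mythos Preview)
Your argument is correct and matches the paper's implicit reasoning: the corollary is placed immediately after Theorem~\ref{thm:l-hard} and Corollary~\ref{crl:omega-basis} precisely because it is the one-line deduction you describe, using that $\vA$ is $\FOm$-definable while $\vEA$ cannot be by the $\LOGSPACE$-hardness result. The only cosmetic difference is that the paper would likely justify $\FOm$-definability of $\vA$ via Proposition~\ref{prop:omega-identities} (since $\vA$ is defined by the single $\omega$-identity $x^\omega x = x^\omega$), but your direct route through Lemma~\ref{lem:green} is equally valid.
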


The hardness result above and the fact that the membership problem for $\vEA$
is known to be decidable in polynomial time raises the question of its exact
complexity. It turns out that the problem is $\LOGSPACE$-complete.
\begin{theorem}
  The membership problem for $\vEA$ is $\LOGSPACE$-complete (under $\AC^0$
  reductions).
  \label{thm:logspace}
\end{theorem}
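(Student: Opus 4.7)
Hardness under $\AC^0$ reductions is already established by Theorem~\ref{thm:l-hard}, so it remains only to place membership in $\vEA$ inside deterministic logarithmic space. The plan is to combine Graham's structure theorem for Rees matrix semigroups~\cite{Graham68} with Reingold's $\LOGSPACE$ algorithm for undirected reachability~\cite{Reingold08}. The starting point is a $\gJ$-class-local characterization: every regular $\gJ$-class $J$ of $S$ yields a completely $0$-simple principal factor, which can be viewed as a Rees matrix semigroup, and Graham's theorem describes the idempotent-generated subsemigroup of this principal factor via the bipartite \emph{incidence graph} whose vertices are the $\gR$- and $\gL$-classes inside $J$ and whose edges mark those $\gH$-classes that contain an idempotent. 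The maximal subgroups of $\langle E(J)\rangle$ coincide with those maximal subgroups of $\langle E(S)\rangle$ that sit inside $J$, and $\langle E(S)\rangle$ is aperiodic iff in each connected component of each such incidence graph the group generated by cycle products of sandwich-matrix entries is trivial.

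Using this characterization, the algorithm would first verify in $\AC^0$ that the input multiplication table is associative, then identify the regular $\gJ$-classes and the incidence graph of each of them through Green's relations, which are computable in $\AC^0$ by Lemma~\ref{lem:green}. For every regular $\gJ$-class I would invoke Reingold's theorem to compute connected components of the incidence graph, as well as canonical tree paths between any two chosen vertices, in logarithmic space. Finally, for every non-tree edge I would form the associated cycle product by iteratively multiplying sandwich-matrix entries (which are elements of $S$ obtained by table lookup) along the logspace-computable fundamental cycle, and accept iff every such product equals the identity of the maximal subgroup of the relevant $\gH$-class.

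The hard part will be making this final cycle-product evaluation fit in $\LOGSPACE$: a cycle can have length up to $|S|$, so it cannot be written down explicitly. The key observation is that each edge of the traversal is produced by a single invocation of Reingold's procedure, which uses only $\bigO(\log |S|)$ bits of additional work space, and that the entire state of the algorithm between two steps consists of a pointer to the current position along the cycle together with a running product, each of logarithmic size; the running product is updated after every step by a single table lookup. Correctness then follows directly from Graham's theorem, since the groups generated by fundamental cycle words generate the maximal subgroups of $\langle E(J)\rangle$, and their simultaneous triviality across all regular $\gJ$-classes of $S$ is equivalent to $\langle E(S)\rangle \in \vA$, that is, to $S \in \vEA$.
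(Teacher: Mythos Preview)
Your proposal is correct and follows essentially the same route as the paper: reduce to the regular $\gJ$-classes, view each as a Rees matrix semigroup, build the incidence graph, and test whether every fundamental cycle (with respect to a spanning forest obtained via Reingold's algorithm) has trivial label, maintaining only a current position and a running product during the traversal. The one place where the paper is more concrete than your sketch is the traversal itself: rather than saying that ``each edge of the traversal is produced by a single invocation of Reingold's procedure,'' the paper fixes the lexicographically first spanning forest (membership of an edge is tested by a Reingold call restricted to smaller edges) and then walks the unique tree path from $a$ back to $b$ using the standard deterministic log-space Euler-tour tree traversal of Cook and McKenzie~\cite{CookM87}, which cleanly separates the connectivity oracle from the path enumeration and makes the $\bigO(\log|S|)$ space bound immediate.
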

\begin{proof}
  In view of Theorem~\ref{thm:l-hard}, it remains to show that the membership
  problem for $\vEA$ is contained within $\LOGSPACE$.
  The proof relies on deep results on the structure of regular $\gJ$-classes
  and their idempotent-generated subsemigroups.
  We sketch the design of a deterministic log-space algorithm. Further details
  are provided in the appendix.

  Firstly, it is well-known that each regular $\gJ$-class of a finite semigroup
  $S$ is a partial subsemigroup of $S$ which is isomorphic to a Rees matrix
  semigroup; see \eg\cite[Section A.4]{rs09qtheory}.
  The construction of this Rees matrix semigroup representation for a given
  regular $\gJ$-class can be performed by a log-space transducer.
  Moreover, a finite semigroup $S$ belongs to $\vEA$ if and only if, for each
  regular $\gJ$-class~$J$ of $S$, the Rees matrix semigroup corresponding
  to~$J$ belongs to $\vEA$~\cite[Corollary 4.13.4]{rs09qtheory}.
  This observation allows us to reduce the problem to the case of Rees matrix
  semigroups:
  we iterate through all regular $\gJ$-classes of our input semigroup and can
  think of having their Rees matrix semigroup representations available.

  The \emph{incidence graph} of a Rees matrix semigroup with structure group
  $G$, index sets $A$ and $B$, and sandwich matrix $C \colon B \times A \to G
  \union \os{0}$ is defined as the edge-labelled graph with vertex set $A
  \union B$, edge set $E = \set{(b, a) \in B \times A}{C(b, a) \ne 0}$ and the
  labels as given by $C$.
  By definition, each edge label is an element from $G$.
  We consider undirected simple cycles in the incidence graph, \ie{}sequences
  $(c_1, \ldots, c_k)$ where $c_1, \ldots, c_k \in A \union B$ are pairwise
  disjoint vertices with $(c_i, c_{i+1}) \in E$ or $(c_{i+1}, c_i) \in E$ for
  all $i \in \os{1, \ldots, k-1}$, and where $c_k = c_1$.
  We extend the edge labelling to $E \union \set{(a, b)}{(b, a) \in E}$ by
  setting $C(a, b) = {(C(b, a))}^{-1}$. The label of an undirected simple cycle
  $(c_1, \dots, c_k)$ is the product $C(c_1, c_2) C(c_2, c_3) \cdots
  C(c_{k-1}, c_k)$ in $G$.
  Now, by~\cite[Corollary~4.13.24]{rs09qtheory}, a Rees matrix semigroup
  belongs to $\vEA$ if and only if each undirected simple cycle in its
  incidence graph is labelled by the identity element $1$.
  The key observation is that we do not need to consider all simple cycles: it
  suffices to choose any spanning forest $F \subseteq E$ and then compute, for
  every edge $(b, a) \in E \setminus F$, the label of the simple cycle obtained
  by starting with $(b, a)$ and then walking along spanning tree edges (and
  reverse spanning tree edges) from $a$ back to $b$.

  Again, it is easy to see that the incidence graph can be computed by a
  log-space transducer from the Rees matrix semigroup representation.
  We then (implicitly) compute the lexicographically first spanning forest $F$ of
  the incidence graph:
  checking whether an edge $(b', a')$ belongs to $F$ amounts to testing whether
  $b'$ is reachable from $a'$ in the (undirected) graph restricted to
  edges smaller than $(b', a')$\,---\,the edge-ordering is given implicitly by
  the way the graph is produced by the log-space transducer.
  For each edge $(b, a) \in E \setminus F$, we check whether the product of the
  labels of $(b, a)$ and the corresponding (reverse) spanning tree edges
  leading from $a$ to $b$ equals $1$ in~$G$.
  This is done using the standard deterministic log-space algorithm for tree
  traversal~\cite{CookM87}, multiplying the labels while walking along the
  edges.
  For a more in-depth description, we refer to the appendix.
\end{proof}

\section{Summary and Outlook}
\label{sec:summary}

We investigated the complexity of the membership problem for several well-known
pseudovarieties.
To this end, we introduced a natural formalism for defining classes of finite
partial groupoids and proved several sufficient conditions for a class to be
definable in this formalism, as well as various closure properties.
We obtained a new proof technique to show that a pseudovariety does not admit a
representation using finitely many $\omega$-identities and applied this
technique to the pseudovariety $\vEA$ by proving that its membership problem is
$\LOGSPACE$-complete.

Many interesting questions remain open.
Does $\FOm$ capture all pseudovarieties whose membership problem is in $\AC^0$?
Is $\FOm$ strictly more expressive than finite sets of $\omega$-identities?
Are there other natural pseudovarieties connected to other complexity classes
such as $\ACC^0$, $\TC^0$, $\NC^1$, $\NL$ or $\PTIME$?

As a final remark, we would like to stress that our techniques cannot be used
to decide whether a regular language given by a recognizing morphism belongs to
a certain \emph{variety of languages}.
Indeed, testing variety membership of recognized languages, is much harder.

\begin{proposition}
  Let $\vV \subsetneq \vS$ be a pseudovariety of finite semigroups, let $h
  \colon A^+ \to S$ be a morphism to a finite semigroup $S$ and let $P
  \subseteq S$.
  Then the problem of deciding whether $h^{-1}(P)$ is recognized by a semigroup
  in $\vV$ is $\NL$-hard (under $\AC^0$ reductions).
  \label{prop:languages}
\end{proposition}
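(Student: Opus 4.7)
The plan is to exhibit an $\AC^0$ many-one reduction from the complement of directed $s$-$t$ reachability, which is $\NL$-complete under $\AC^0$ reductions by the Immerman--Szelepcs\'enyi theorem. The construction produces $(h, P)$ such that $h^{-1}(P) = \emptyset$ exactly when $t$ is unreachable from $s$; in the reachable case, I arrange that the syntactic semigroup of $h^{-1}(P)$ contains a prescribed semigroup $T \notin \vV$ as a subsemigroup, so that closure of $\vV$ under subsemigroups then separates the two cases.

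Since $\vV \subsetneq \vS$, I fix by Eilenberg's correspondence a regular language $L_T \subseteq B^+$ whose syntactic semigroup $T$ is not in $\vV$, with syntactic morphism $h_T \colon B^+ \to T$ and accepting set $P_T$. Given $G = (V, E)$ with $s \ne t$, take $A = E \union B$ and let $B_V$ be the semigroup on $(V \times V) \union \os{0}$ with $(u, v)(u', v') = (u, v')$ when $v = u'$, and $0$ otherwise. Put $S = B_V^1 \times T^1$ (adjoining identities), define $h((u, v)) = ((u, v), 1)$ for $(u, v) \in E$ and $h(b) = (1, h_T(b))$ for $b \in B$, and take $P = \set{((s, t), \tau)}{\tau \in P_T}$. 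A brief unfolding shows $w \in h^{-1}(P)$ if and only if the $E$-subsequence $\pi_E(w)$ forms a nonempty directed path $s \to t$ and the $B$-subsequence $\pi_B(w)$ lies in $L_T$.

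For correctness, if $t$ is unreachable from $s$ then $h^{-1}(P) = \emptyset$, whose syntactic semigroup is trivial and hence in $\vV$. In the reachable case, write $L = h^{-1}(P)$. For $u, v \in B^+$ and any context $x, y \in A^*$, whether $xuy \in L$ splits into an $E$-part (that $\pi_E(x)\pi_E(y)$ is a path $s \to t$, independent of $u, v$) and a $B$-part (that $\pi_B(x) \cdot u \cdot \pi_B(y) \in L_T$). Reachability ensures that for any $b_1, b_2 \in B^*$ one can pick $x, y \in A^*$ with $(\pi_B(x), \pi_B(y)) = (b_1, b_2)$ and $\pi_E(xy)$ an $s$-$t$ path (concatenate a fixed path with $b_1$ on one side). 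It follows that $u \equiv_L v$ restricted to $B^+$ coincides with $u \equiv_{L_T} v$, so $B^+/{\equiv_L}$ is a subsemigroup of the syntactic semigroup of $L$ and is isomorphic to $T$; closure of $\vV$ under subsemigroups then forces the syntactic semigroup of $L$ outside $\vV$.

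The reduction is computable in $\AC^0$ because $|S| = O(|V|^2)$ and every bit of the multiplication table is determined by a few equality tests on vertex labels plus a constant-size lookup in $T^1$; $h$ and $P$ are similarly easy to output. The delicate step is the embedding argument: what is needed is that $T$ appears as a \emph{subsemigroup} of the syntactic semigroup of $L$ rather than merely as a quotient, as only this yields the desired conclusion via closure under subsemigroups. This hinges crucially on the free interleaving of $E$- and $B$-letters in $A^+$, which lets any desired $B^*$-context be combined with an arbitrary witnessing $E$-path without interference. Minor edge cases (such as $s = t$ or $E = \emptyset$) can be handled by a direct $\AC^0$-computable output.
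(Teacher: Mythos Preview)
Your proof is correct and follows the same global strategy as the paper: reduce from the complement of directed $s$-$t$ reachability, fix a language $L_T$ whose syntactic semigroup $T$ lies outside $\vV$, and build a recognizing morphism so that $h^{-1}(P)$ is empty in the unreachable case and encodes $L_T$ in the reachable case.

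The implementations diverge in two places. First, your semigroup is the direct product $B_V^1 \times T^1$, so edge letters and $B$-letters commute freely and the language factors as a shuffle: $\pi_E(w)$ must be an $s$-$t$ path and $\pi_B(w)$ must lie in $L_T$. The paper instead uses $S = V \times T^1 \times V \cup \{0\}$ with a multiplication that forces $B$-letters (encoded as $(t,b,t)$) to occur only after the path has reached $t$; the language there is a concatenation rather than a shuffle. Second, your correctness argument in the reachable case is a direct embedding: you show that ${\equiv_L}$ restricted to $B^+$ coincides with ${\equiv_{L_T}}$, so $T$ sits inside the syntactic semigroup of $L$ as a subsemigroup, and closure of $\vV$ under subsemigroups finishes it. The paper instead argues by contradiction: assuming $h^{-1}(P)$ were recognized by some $T' \in \vV$ via $f$, it prefixes a fixed $s$-$t$ path and pulls back along $b \mapsto f(t,b,t)$ to recognize $L_T$ inside $T'$. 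Your direct-product construction buys you the cleaner syntactic-embedding argument (the free interleaving is exactly what makes every $L_T$-context realizable as an $L$-context), while the paper's construction keeps the alphabet and accepting set slightly more constrained but needs the pullback step. Both are equally valid and of comparable length.
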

\begin{proof}
  We prove $\NL$-hardness of the problem by a reduction from the complement of
  $s$-$t$-connectivity in directed graphs.
  Let $G = (V, E)$ be a directed graph and let $s, t \in V$ with $s \ne t$.

  Let $L \subseteq B^+$ be a regular language which is not recognized by any
  semigroup from $\vV$ and let $g \colon B^+ \to T$ be a morphism to a finite
  semigroup $T \not\in \vV$ and $Q \subseteq T$ such that $g^{-1}(Q) = L$.
  Such a language and such a morphism exist by Eilenberg's Theorem on
  \emph{$+$-varieties of languages}~\cite[Theorem 3.4s]{eil76} and the assumption
  $\vV \ne \vS$.
  We extend $g$ to a monoid morphism $g \colon A^* \to T^1$.
  Below, the symbol $\circ$ will be used to denote the multiplication in $T^1$.

  We define a semigroup $S = V \times T^1 \times V \union \os{0}$ by the binary
  operation
  \begin{equation*}
    (v, m, w) \cdot (x, n, y) =
      \begin{cases}
        (v,  m \circ n, y) & \text{if $w = x$ and $x \ne t \lor y = t$,} \\
        0 & \text{otherwise}.
      \end{cases}
  \end{equation*}
  The element $0$ is a zero element.
  Let $A = \set{(v, \varepsilon, w)}{(v, w) \in E} \union \set{(t, b, t)}{b \in
  B}$ and let $P = \set{(s, m, t)}{m \in Q}$.
  Let $h \colon A^+ \to S$ be the mapping defined by $h(v, b, w) = (v, g(b), w)$.

  If $t$ is not reachable from $s$ in $G$, we have $h^{-1}(P) = \emptyset$. The
  empty language is recognized by every finite semigroup. In particular,
  $h^{-1}(P)$ is recognized by a semigroup from $\vV$.

  Now, suppose that there exists a directed path $(x_1, \dots, x_k)$ with $x_1 =
  s$ and $x_k = t$. We may assume that $t \not\in \os{x_1, \dots, x_{k-1}}$.
  Assume, for the sake of contradiction, that the language $h^{-1}(P)$ is
  recognized by a finite semigroup $T' \in \vV$, \ie{}there exist a morphism $f
  \colon A^+ \to T'$ and a set $R \subseteq T'$ with $f^{-1}(R) = h^{-1}(P)$.
  We let $g' \colon B^+ \to T'$ be the morphism defined by $g'(b) = f(t, b, t)$
  for all $b \in B$ and we let $Q' = \set{r \in T'}{f\big((x_1, \varepsilon, x_2)
  \cdots (x_{k-1}, \varepsilon, x_k)\big)r \in R}$.
  We obtain
  \begin{align*}
    g(b_1 \cdots b_k) \in Q & \quad\Leftrightarrow\quad h\big((x_1, \varepsilon, x_2) \cdots (x_{k-1}, \varepsilon, x_k) \ms (t, b_1, t) \cdots (t, b_k, t)\big) \in P \\
                            & \quad\Leftrightarrow\quad f\big((x_1, \varepsilon, x_2) \cdots (x_{k-1}, \varepsilon, x_k) \ms (t, b_1, t) \cdots (t, b_k, t)\big) \in R \\
                            & \quad\Leftrightarrow\quad g'(b_1 \cdots b_k) \in Q'
  \end{align*}
  for all sequences of elements $b_1, \dots, b_k \in B$.
  Thus, the preimage of $Q'$ under $g' \colon B^+ \to T'$ is~$L$, contradicting
  the choice of $L$.
\end{proof}

\subparagraph{Acknowledgements.} I would like to thank Charles Paperman for
suggesting the statement of Lemma~\ref{lem:omega-green} which led to the proof
that all pseudovarieties definable by finite sets of $\omega$-identities are
$\FOm$-definable.

\newcommand{\Ju}{Ju}\newcommand{\Ph}{Ph}\newcommand{\Th}{Th}\newcommand{\Ch}{Ch}\newcommand{\Yu}{Yu}\newcommand{\Zh}{Zh}\newcommand{\St}{St}\newcommand{\curlybraces}[1]{\{#1\}}

\newpage
\appendix

\section{Additions to Section~\ref{sec:complexity}}

\subsection{Correctness of the Construction in Theorem~\ref{thm:l-hard}}

We show that the Rees matrix semigroup $S$ with structure group $C_2 = \os{1,
a}$, index sets~$V$, and sandwich matrix $C \colon V \times V \to C_2 \union
\os{0}$ given by
\begin{equation*}
  C(v, w) =
    \begin{cases}
      1 & \text{if $v = w$ or $\os{v, w} \in E$,} \\
      a & \text{if $\os{v, w} = \os{s, t}$,} \\
      0 & \text{otherwise,}
    \end{cases}
\end{equation*}
is contained in $\vEA$ if and only if in the underlying graph $G = (V, E)$, the
vertex $t$ is not reachable from $s$. Remember that by assumption, we have
$s \ne t$ and $\os{s, t} \not\in E$.

Let $(v, g, w)$ with $v, w \in V$ and with $g \in C_2$. When computing $(v, g,
w)(v, g, w)$, there are three possibilities. If $v = w$ or $\os{v, w} \in E$,
we obtain $(v, g, w)(v, g, w) = (v, g^2, w)$ which equals $(v, g, w)$ if and
only if $g = 1$. If $\os{v, w} = \os{s, t}$, then $(v, g, w)(v, g, w) = (v,
gag, w)$ which equals $(v, g, w)$ if and only if $g = a$. If neither $v = w$
nor $\os{v, w} \in E$ nor $\os{v, w} = \os{s,t}$, we have $(v, g, w)(v, g, w) =
0$.
Thus, the only idempotent elements of $S$ are the elements $(v, 1, v)$ with $v
\in V$, the elements $(v, 1, w)$ with $\os{v, w} \in E$, as well as $(s, a,
t)$, $(t, a, s)$ and $0$.

We will now show that the following properties are equivalent:
\begin{enumerate}
  \item\label{enum:aaa} $S \not\in \vEA$, \ie$\langle E(S) \rangle \not\in \vA$.
  \item\label{enum:bbb} There exist $x_1, \ldots, x_k \in V$ such that $(x_1, 1, x_1) \cdots (x_k, 1, x_k) \in \os{(s, 1, t), (t, 1, s)}$.
  \item\label{enum:ccc} There exists a path $(x_1, \dots, x_k)$ in $G$ with $\os{x_1, x_k} = \os{s, t}$.
\end{enumerate}

\paragraph{(\ref{enum:aaa}) implies (\ref{enum:bbb}).}
First, note that $(v, g, w) = (v, 1, v) (w, 1, w)$ for all $(v, g, w) \in
E(S)$. Thus, we have $\langle E(S) \rangle = \langle \set{(v, 1, v)}{v \in V}
\union \os{0} \rangle$.
Now, if $\langle E(S) \rangle \not\in \vA$, there exist two elements%
\begin{equation*}
  (v, g, w) = (v_1, 1, v_1) \cdots (v_k, 1, v_k) \quad\text{and}\quad
  (x, h, y) = (x_1, 1, x_1) \cdots (x_\ell, 1, x_\ell)
\end{equation*}
such that $(v, g, w) \Heq (x, h, y)$ and $(v, g, w) \ne (x, h, y)$, \ie{}$v =
x$ and $w = y$ and $g \ne h$.
We obtain
\begin{equation*}
  (v_1, 1, v_1) \cdots (v_1, 1, v_k) (x_\ell, 1, x_\ell) \cdots (x_1, 1, x_1) = (v, g, w) (y, h, x) = (v, a, v)
\end{equation*}
where the first equality holds by symmetry of $C$ (and from the fact that $C_2$
is Abelian) and the second equality follows from $v = x$, $w = y$ and from $g
\ne h$, \ie$gh = a$.

To simplify notation, we let $v_{k+1+i} = x_{\ell-i}$ for $i \in \os{0, \dots,
\ell-1}$.
Since the product $(v_1, 1, v_1) \cdots (v_{k+\ell}, 1, v_{k+\ell})$ equals
$(v, a, v)$, there exists some $i \in \os{1, \dots, k+\ell-1}$ with $\os{v_i,
v_{i+1}} = \os{s, t}$. Therefore,
\begin{equation*}
  (v_{i+1}, 1, v_{i+1}) \cdots (v_{k+\ell}, 1, v_{k+\ell}) (v_1, 1, v_1) \cdots (v_i, 1, v_i) \in \os{(s, 1, t), (t, 1, s)},
\end{equation*}
as desired.

\paragraph{(\ref{enum:bbb}) implies (\ref{enum:ccc}).}
Suppose that $(x_1, 1, x_1) \cdots (x_k, 1, x_k) \in \os{(s, 1, t), (t, 1, s)}$.
We apply induction on the cardinality of the set $I$ of indices $i \in \os{1,
\dots, k-1}$ with $\os{x_i, x_{i+1}} = \os{s, t}$.
Note that $\abs{I}$ is even, otherwise $(x_1, 1, x_1) \cdots (x_k, 1, x_k) \in
\os{(s, a, t), (t, a, s)}$ by the definition of $S$.
Without loss of generality, we may also assume that $x_i \ne x_{i+1}$ for $1
\le i < k$.

If $\abs{I} = 0$, then $(x_1, \dots, x_k)$ is a path connecting $s$ and $t$ in
$G$.

If $\abs{I} \ge 2$, consider two indices $i, j \in I$ such that $i < j$ and
$\os{i+1, \dots, j-1} \intersect I = \emptyset$.
If $x_i = x_j$, then $(x_{i+1}, \dots, x_j)$ is a path connecting $s$ and $t$
in $G$.
If $x_i \ne x_j$, it is easy to verify that the product $(x_i, 1, x_i) \cdots
(x_{j+1}, 1, x_{j+1})$ equals $(x_i, 1, x_i)$. Thus,
\begin{equation*}
  (x_1, 1, x_1) \cdots (x_i, 1, x_i) (x_{j+2}, 1, x_{j+2}) \cdots (x_k, 1, x_k) \in \os{(s, 1, t), (t, 1, s)}
\end{equation*}
and we obtain the statement by induction on $\abs{I}$.

\paragraph{(\ref{enum:ccc}) implies (\ref{enum:aaa}).}
Suppose that $(x_1, \dots, x_k)$ is a path in $G$ with $x_1 = s$ and $x_k = t$.
Then, $(x_1, 1, x_1) \cdots (x_k, 1, x_k) = (s, 1, t)$ in $S$. Thus, $(s, 1, t)
\in \langle E(S) \rangle$. Since $(s, a, t) \in E(S)$ and $(s, 1, t) \Heq (s,
a, t)$, this yields $\langle E(S) \rangle \not\in \vA$.
The case $x_1 = t$ and $x_k = s$ is symmetric.
\qed%

\subsection{Additions to the Proof of Theorem~\ref{thm:logspace}}

Some details were skipped in the proof of Theorem~\ref{thm:logspace} given in
Section~\ref{sec:complexity}. We try to fill in the missing gaps here.
The input is a finite semigroup $S$ of cardinality $N$.

\paragraph{Computing the Rees Matrix Representation.}

The first step in the proof is a reduction to the case of Rees matrix
semigroups.
This is based on the observation that, for every idempotent element $e \in
E(S)$, one can construct a Rees matrix semigroup which is isomorphic to the
finite partial semigroup corresponding to the $\gJ$-class containing $e$.
Of course, the result of the construction cannot be stored on the work tape but
it suffices to be able to enumerate its elements and perform single
multiplications within logarithmic space.
Being able to do this, we can iterate over all elements of the input semigroup
$S$ (using a single counter with $\ceil{\log N}$ bits), and for each idempotent
element, we run the algorithm Rees matrix semigroup algorithm on the implicit
Rees matrix semigroup representation.

To construct the Rees matrix semigroup representation for the $\gJ$-class of a
given idempotent element $e \in E(S)$, we make some preliminary considerations.
Firstly, there is a natural order on the elements of~$S$ given by their
encoding in the input, \ie{}we say that an element $s \in S$ is
\emph{(strictly) smaller} than $t \in S$ if their corresponding encodings as
binary numbers satisfy $s < t$.
Note that one can easily check within logarithmic space whether two elements
$s, t \in S$ satisfy any of Green's relations ($s \Req t$, $s \Leq t$, $s \Heq
t$ or $s \Jeq t$).

The index set $A$ of our Rees matrix semigroup for the $\gJ$-class of $e$ then
consists of all elements $a \in S$ such that $a \Leq e$ and such that for all
elements $a' < a$, we have $a' \not\Req a$, \ie{}we pick minimal elements from
each $\gH$-class contained in the $\gL$-class of~$e$. Roughly speaking, the
elements of $A$ correspond to the $\gR$-classes of $S$.
Analogously, the index set $B$ consists of all minimal elements of
$\gH$-classes contained in the $\gR$-class of~$e$.
The $\gH$-class of $e$ itself is the structure group $G$.
The sandwich matrix $C \colon B \times A \to G \union \os{0}$ is given by $C(b,
a) = ba$ if $ba \in G$ (\ie$ba \Heq e$) and $C(b, a) = 0$ otherwise.

Testing whether a tuple $(a, g, b) \in S^3$ belongs to the Rees matrix
semigroup $A \times G \times B \union \os{0}$ and performing multiplications
$(a, g, b)(c, h, d)$ now both reduces to multiplications in $S$, testing
Green's relations in $S$ and comparing integers.
A routine calculation shows that the Rees matrix semigroup is indeed isomorphic
to the partial semigroup obtained by restricting $S$ to the $\gJ$-class of $e$;
the zero element of the Rees matrix semigroup corresponds to the ``undefined
value'' in this partial semigroup. Details are available, e.g.,
in~\cite[Section~A.4]{rs09qtheory}.

\paragraph{Computing the Incidence Graph.}

An implicit representation of the incidence graph can be computed in a fashion
similar to the construction of the Rees matrix semigroup above.
Remember that the vertices of this graph are $A \union B$. As observed in the
construction of the Rees matrix semigroup, testing whether an element $s \in S$
belongs to $A$ or to $B$ is attainable in logarithmic space.
Checking whether $(b, a) \in B \times A$ is an edge in the incidence graph
reduces to computing the product $ba$ in $S$ and testing whether the result is
$\gH$-equivalent to the idempotent $e$ of the currently considered $\gJ$-class.
The label $C(b, a)$ of an edge is obtained by computing the product $ba$ in $S$
and its inverse $C(a, b) = {(C(b, a))}^{-1}$ is obtained by iterating over all
elements of $S$ until an element $s$ with $s \Heq ba$ and $bas = e = sba$ is
found.

\paragraph{Checking Cycles.}

A lexicographically first spanning forest of the incidence graph can be
computed as already mentioned in the proof of the theorem: to check whether an
edge $(b, a)$ belongs to the spanning forest, we run a deterministic log-space
algorithm for undirected graph reachability on the (undirected) incidence
graph, pretending that edges $(b', a')$ with $b' > b$ or with $b' = b \land a'
> a$ do not exist.
The existence of a deterministic log-space algorithm for undirected graph
reachability was proven in 2008 by Omer Reingold~\cite{Reingold08}.

To enumerate all \emph{bridges} (\ie{}edges of the incidence graph which are
not part of this spanning forest), we iterate over all tuples $(b, a) \in S^2$,
check for each tuple whether $(b, a) \in B \times A$ and then ask whether or
not $(b, a)$ belongs to the spanning forest as described above.
The process of walking along the simple undirected cycle \emph{induced by a
bridge $(b, a)$}, that is, the unique simple cycle consisting of the edge $(b,
a)$ and spanning tree edges (and reverse spanning tree edges) from $a$ to $b$
works as follows:
when starting at the vertex $a$, we look at all incident edges and take the
lexicographically smallest among those which are part of the spanning forest.
We walk along this edge.
Whenever we arrive at a vertex~$c$, we again look at all incident edges and
choose the lexicographically smallest edge which belongs to the spanning forest
and is larger than the edge we just used to enter $c$. If no such edge exists,
we choose the lexicographically smallest edge incident to $c$ instead.
Note that if we arrive at a leaf, we immediately go back to where we just came
from.
The process can be thought of as tracing an Euler tour around the spanning
tree.
We eventually arrive at $b$ because it belongs to the same connected component
as $a$.
Whenever walking along an edge $(b', a')$, we multiply the currently stored
label of the path by the label $C(b', a')$ of the edge. When walking along such
an edge in the opposite direction, we multiply the currently stored label by
$C(a', b')$. Since $C(b', a')$ and $C(a', b')$ are mutually inverse, we
undo ``wrong'' multiplications in backtracking steps.

The fact that it suffices to compute the labels of simple undirected cycles
induced by bridges is based on the following idea: firstly, an undirected cycle
$(c_1, \dots, c_k)$ is labelled by $1$ if and only if its reverse $(c_k, \dots,
c_1)$ is labelled by $1$. Now suppose that every simple undirected cycle
induced by a bridge is labeled by $1$ (which is encoded by the element $e \in
S$).
Then, the label of any bridge $(b, a)$ is the same as the label of the shortest
(undirected) path in the spanning forest leading from $b$ to $a$. Analogously,
the label of any pair $(a, b)$ for a bridge $(b, a)$ is the same as the label
of the shortest (undirected) path in the spanning forest leading from $a$ to
$b$.
Therefore, given any simple undirected cycle, we can repeatedly replace bridges
(or reverse bridges) in this cycle by a sequence of (partly reverse) spanning
tree edges without changing its label, until only a single bridge remains.
After each substitution, subpaths of the form $(c_1, \dots, c_{k-1}, c_k,
c_{k-1}, \dots, c_1)$, so called \emph{backtracks}, can be removed.
For further details, we recommend reading Section~4.13.1 of~\cite{rs09qtheory}.

\end{document}